\newcommand{\C}{\mathcal{C}}
\newcommand{\D}{\mathcal{D}}
\newcommand{\I}{\mathcal{I}}
\newcommand{\F}{\mathbb{F}}
\newcommand{\word}[1]{\mathbf{#1}}
\newcommand{\bv}{\word{b}}
\newcommand{\cv}{\word{c}}
\newcommand{\gv}{\word{g}}
\newcommand{\hv}{\word{h}}
\newcommand{\mv}{\word{m}}
\newcommand{\sv}{\word{s}}
\newcommand{\xv}{\word{x}}
\newcommand{\yv}{\word{y}}
\newcommand{\zz}{\word{0}}
\newcommand{\mat}[1]{\boldsymbol{#1}}
\newcommand{\Gm}{\mat{G}}
\newcommand{\Hm}{\mat{H}}
\newcommand{\Lm}{\mat{L}}
\newcommand{\Mm}{\mat{M}}
\newcommand{\Qm}{\mat{Q}}
\newcommand{\Tm}{\mat{T}}
\newcommand{\Um}{\mat{U}}
\begin{document}
		\title{Generalized Subspace Subcodes in the Rank Metric}
	
	\ifthenelse{\boolean{anonymous}}
	{ 
		\author{
			Anonymized for submission
		}
		\institute{
		}
	} 
	{ 
		\author{ 
                Ousmane Ndiaye\inst{1}
                \and
                Peter Arnaud Kidoudou\inst{2}
                \and
                Herv\'e Tale Kalachi\inst{3}
		}
		\institute{
            Université Cheikh Anta Diop de Dakar, FST, DMI, LACGAA,\\ Senegal,\\\email{ousmane3.ndiaye@ucad.edu.sn}
            \and
              université Marien Ngouabi (UMNG), Faculty of Science and  Technology\\
            \email{peter.kidoudou@umng.cg}
                \and
			Computer Engineering Department, National Advanced School of Engineering of Yaound\'e, University of Yaoundé 1, Yaoundé,
Cameroon,\\
			\email{hervekalachi@gmail.com }
		}
		
	} 
	
	\date{\today}
	\maketitle
	
	\begin{abstract}
Rank-metric codes were studied by E. Gabidulin in 1985 after a brief introduction by Delsarte in 1978 as an equivalent of Reed-Solomon codes, but based on linearized polynomials. They have found applications in many areas, including linear network coding and space-time coding.
 They are also used in cryptography to reduce the size of the keys compared to Hamming metric codes at the same level of security. However, some families of rank-metric codes suffer from structural attacks due to the strong algebraic structure from which they are defined. 
 It therefore becomes interesting to find new code families in order to address these questions in the landscape of rank-metric codes.
 \par In this paper, we provide a generalization of Subspace Subcodes in Rank metric introduced by Gabidulin and Loidreau. We also  characterize this family by giving an algorithm which allows to have its generator and parity-check matrices based on the associated extended codes. We have also studied the specific case of Gabidulin codes whose underlying decoding algorithms are known. Bounds for the cardinalities of these codes, both in the general case and in the case of Gabidulin codes, are also provided. 
\keywords{Coding theory \and rank-metric \and Gabidulin code \and Cryptography \and Shortened code \and Punctured code \and Subfield Subcodes.}
\end{abstract}

\section{Introduction}
Rank-metric codes were studied by E. Gabidulin in 1985 \cite{gabidulin1985theory} after a brief introduction by Delsarte in 1978 \cite{DELSARTE1978226} who gave a description based on finite fields and its properties. By construction, Gabidulin's codes are very close to Reed-Solomon \cite{Reed1960PolynomialCO} codes, since codewords are obtained by evaluation of $q-$polynomials on a support included in an extension of degree $m$ of $\F_q$.
Rank-metric codes have found applications in network coding \cite{El2018efficient}, for example, when the transmitter and receiver are oblivious to the inner workings and topological network. Rank-metric codes has been also used in the theory of space-time codes, introduced by Lu and Kumar from Gabidulin codes \cite{1424310}. 

Furthermore, today's cryptography uses number theory problems (factorization, discrete log, ...) to secure our information. However, it is well known from \cite{S94a} that all these problems can be efficiently solved using quantum computers. 
The theory of error correcting codes is a serious candidate that offers perspectives to face this machine via the Syndrome Decoding problem in the Hamming metric or in the Rank metric. 
They have been very used in cryptography these last decades to provide cryptographic primitives for encryption, signature, hashing or pseudo-random number generation. The main reason for using rank-metric codes in cryptography is the possibility of reducing the size of the keys compared to Hamming metric codes at the same level of security. 
Rank metric based cryptography began with the GPT cryptosystem and its variants \cite{gabidulin1991ideals,Berger2005HowTM,Gabidulin2008attacks,Loidreau2010designing,Rashwan2010smart,Gabidulin2008attacks,Rashwan2011security} based on Gabidulin codes \cite{gabidulin1985theory}, which are rank-metric analogues of Reed-Solomon codes.
Unfortunately, as in the case of Reed-Solomon codes, the strong algebraic structure of these codes has been successfully exploited to attack the original GPT cryptosystem and its variants in a series of works initiated by Gibson and Overbeck   \cite{G95,G96,overbeck2005new,Otmani2018improved,Horlemann2018extension,Kalachi2022failure}.
Recently, some public key cryptosystems based on rank-metric codes \cite{is2019rollo,melchor2017rank} were candidates at the NIST competition for Post-Quantum Cryptography and up to the second round. But again, they were all eliminated after the second round of the competition due to security defects. Considering the fact that the very old McEliece cryptosystem that uses Subspace Subcodes of Generalized Reed-Solomon codes is still secure and currently at the fourth found of the NIST competition, it becomes  theoretically interesting to study Subspace Subcodes in the rank-metric.

The notion of Subspace Subcode is used to denote a Subcode whose components of each codeword belong to the same vector subspace. This notion was first introduced in the Hamming metric for Reed-Solomon codes by Hattori, McEliece, and Solomon, G. \cite{705564}. A few years later, the same notion was introduced for rank-metric codes by Gabidulin and Loidreau with applications to cryptography \cite{GabLoidreau2008,Berger2005HowTM}. Recently, Berger, Gueye and Klamti \cite{BergGSS2019} proposed a generalization of these Subcodes in the Hamming metric by allowing the components of each codeword of the Subcode to be in different subspaces. This previous work was followed by an article of Berger, Gueye, Klamti and Ruatta \cite{BGKR19}, proposing a cryptosystem based on quasi–cyclic Subcodes of Subspace Subcode of Reed-Solomon codes.

This paper introduces and characterizes the family of Generalized Subspace Subcodes of a rank-metric code by giving an algorithm which allows to construct their generator and parity-check matrices based on the associated extended codes. Bounds of the cardinalities of these codes both in the general case and in the case of Gabidulin codes are also provided as a generalization of the results obtained in \cite{GabLoidreau2008}. The work ends by a focus on the specific case of Gabidulin codes whose decoding algorithms are known.

The rest of the document is organized as follows, we start in section \ref{sec:prel} with some preliminaries to identify the generalities related to rank-metric codes. Section \ref{sec:gssrm} presents some results on Generalized Subspace Subcodes in the rank-metric and a new result on the lower bound of their size. We give our main results on Generalized Subspace Subcodes of Gabidulin codes in section \ref{sec:gab} by giving bounds of their size and an algorithm to construct directly a generator matrix. Section \ref{sec:parent} deals with the parent Codes of Gabidulin RGSS codes for decoding subcodes as alternative to using the decoding algorithm of the supercode. In section \ref{sec:application} we give some directions for their potential applications in cryptology and how it can reduce the key sizes. Finally, section \ref{sec:conclusion} concludes the paper and gives some perspectives.

\section{Preliminaries} \label{sec:prel}
\subsection{Rank-Metric Codes}
The \emph{rank weight} $w_{R}(\xv)$ of a word $\xv=(x_1, \dots, x_n) \in \mathbb{F}_{q^m}^n$ in an extension field $\mathbb{F}_{q^m}/\mathbb{F}_{q}$ is defined by the maximal number of its elements that are linearly independent over the base field $\mathbb{F}_q$, where $m$ and $n$ are positive integers and $q$ a power of a prime number. The \emph{rank distance} $d_{R}$ between two words is defined by the rank weight of their difference, i.e. $d_{R}(\xv,\yv)=w_{R}(\xv-\yv)$. It is well known that $d_{R}$ has the properties of a metric on $\mathbb{F}_{q^m}^n$. \\
\begin{definition} Let $\xv=(x_1, \dots, x_n) \in \mathbb{F}_{q^m}^n$ and $\{b_1, b_2, ..., b_m\}$ a basis of $\F{q^m}$ over $\F_q$. We can write $x_j = \sum\limits_{i=1}^m x_{ij} b_i \in \F_{q^m}$ for each $j = 1, \cdots,n$ with $x_{ij} \in \F_q$. The rank weight $w_{R}(\xv)$ of $\xv$ is defined as the rank of the matrix $M_{\xv}=(x_{ij}) \in \F_q^{m\times n}$.
\end{definition}
In the sequel, a linear code of length $n$ and dimension $k$ (also called $[n,k]-$linear code or simply $[n,k]-$code) over $\mathbb{F}_{q^m}$ will denote a $k-$dimensional subspace of the $n-$dimensional vector space $\mathbb{F}_{q^m}^n$. A rank-metric code is then a linear code endowed with the above metric, called rank-metric. Given a rank-metric code $\C \subset \F_{q^m}^n$, its minimum rank distance is
$$d_R(\C)= \min\limits_{\cv_1\neq \cv_ 2 \in \C }w_{R}(\cv_1-\cv_2)=\min\limits_{\cv \in \C , \cv \neq \zz }w_{R}(\cv).$$




If $\C \subset \mathbb{F}_{q^m}^n$ is a $k-$dimensional rank-metric code with minimum rank distance $d_R(\C)$, it is said to be a $[n, k, d=d_R(\C)]_{q^m}-$code. The
parameters of a $[n, k, d=d_R(\C)]_{q^m}-$code are related by an equivalent of the Singleton bound for the rank distance, see \cite{gabidulin1985theory}:
$$Card(\C)\leq q^{\min(m(n-d+1),n(m-d+1))}.$$ 
Where $Card$ is the cardinality of $\C$.
Furthermore, a code satisfying the equality $Card(\C) = q^{\min(m(n-d+1),n(m-d+1))}$ is called a Maximum Rank Distance (MRD) code.

\subsection{Punctured and Shortened Codes}
Punctured and shortened codes of a given code are very important derivative codes to characterize certain subcodes, but also to mount structural attacks against code-based cryptosystems \cite{AttackwildMcEliece2017}. We are going to use them on extended codes to characterize Subspace Subcodes in the rank-metric. We recall here their definitions and some needed properties.

\begin{definition}(Punctured code)\\
	Let $\C \subset \F_q^n$ be a linear code, $\I \subseteq \{1, .., n\}$ a set of coordinates.\\
The punctured code of $\C$ on $\I$ denoted by $Punct_{\I}(\C)$ is defined by :
$$Punct_{\I}(\C)=\{ (c_i)_{i \in \{1, .., n\}\setminus \I} ~|~ c \in \C \}.$$
This code is of length $n-|\I|$.
\end{definition}
\begin{proposition}
  Let $\C \subset \F_q^n$ be a $[n,k,d]-$code, $\I \subseteq \{1, .., n\}$. Then $Punct_{\I}(\C)$  is an  \\
  $[n-|\I|, k', d']-$code such that:
  $$ k-|I| \leq k' \leq k ~and~ d-|I| \leq d' \leq d.$$
\end{proposition}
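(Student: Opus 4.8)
The plan is to realize puncturing as a linear projection and then read off the four inequalities from rank--nullity together with an elementary weight-counting argument. First I would introduce the coordinate projection $\pi\colon \F_q^n \to \F_q^{\,n-|\I|}$ that deletes the entries indexed by $\I$. This map is $\F_q$-linear, and by definition $Punct_{\I}(\C)=\pi(\C)$; being the image of a subspace under a linear map, it is itself a subspace of $\F_q^{\,n-|\I|}$, hence a linear code of length $n-|\I|$. Write $k'$ for its dimension and $d'$ for its minimum distance, and let $w(\cdot)$ denote the (Hamming) weight.

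For the dimension bounds I would apply rank--nullity to the restriction $\pi|_{\C}$. Its image is $Punct_{\I}(\C)$, of dimension $k'$, while its kernel is exactly the set of codewords of $\C$ that vanish outside $\I$, i.e. those supported on $\I$. Such words lie in the $|\I|$-dimensional coordinate subspace indexed by $\I$, so $\dim\ker(\pi|_{\C}) \leq |\I|$. Since $k = k' + \dim\ker(\pi|_{\C})$, this yields $k-|\I| \leq k' \leq k$ at once, the upper bound also being clear because a linear map cannot increase dimension.

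For the distance, the key observation is that deleting $|\I|$ coordinates removes at most $|\I|$ nonzero entries, so $w(\pi(\cv)) \geq w(\cv) - |\I|$ for every $\cv \in \C$. Taking a nonzero minimum-weight codeword $\pi(\cv)$ of $Punct_{\I}(\C)$ and lifting it to some $\cv \in \C$, we have $\cv \neq \zz$, hence $w(\cv) \geq d$, and therefore $d' = w(\pi(\cv)) \geq d-|\I|$. For the upper bound I would start from a minimum-weight codeword $\cv \in \C$ with $w(\cv)=d$ and use $w(\pi(\cv)) \leq w(\cv) = d$.

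The main obstacle sits precisely in this last step: the inequality $d' \leq d$ is immediate only when the chosen minimum-weight codeword survives puncturing, i.e. when $\pi(\cv) \neq \zz$. If every minimum-weight codeword of $\C$ is supported entirely on $\I$, then all of them are annihilated and the surviving minimum weight can in fact exceed $d$; this is the same phenomenon that allows $\dim\ker(\pi|_{\C})$ to be positive and $k'$ to drop strictly below $k$. The cleanest way to rule out the delicate case is to note that it cannot occur as soon as $|\I| < d$: then no nonzero codeword can be supported within $\I$, so $\ker(\pi|_{\C}) = \{\zz\}$, every codeword survives with weight reduced by at most $|\I|$, and both $k'=k$ and $d-|\I| \leq d' \leq d$ follow immediately. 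I would therefore present the general bounds with this caveat made explicit, flagging that the upper bound $d' \leq d$ relies on the existence of a minimum-weight codeword not contained in the coordinate subspace indexed by $\I$.
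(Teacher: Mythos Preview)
The paper states this proposition without proof, as a standard preliminary fact, so there is no argument in the paper to compare against. Your approach via the coordinate projection $\pi$ and rank--nullity is the natural one, and your treatment of the dimension bounds and of the lower bound $d' \geq d - |\I|$ is correct and cleanly written.

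More importantly, you have correctly identified that the upper bound $d' \leq d$ as stated is not valid in general. A concrete counterexample: over $\F_2$, take the code with generator matrix
\[
\begin{pmatrix} 1 & 0 & 0 & 0 \\ 0 & 1 & 1 & 1 \end{pmatrix},
\]
which has $d = 1$, and puncture on $\I = \{1\}$; the resulting code is the $[3,1,3]$ repetition code, so $d' = 3 > d$. Your diagnosis is exactly right: the inequality $d' \leq d$ holds only when some minimum-weight codeword of $\C$ is not entirely supported on $\I$, and your observation that $|\I| < d$ suffices to guarantee this (and simultaneously forces $k' = k$) is the standard and correct way to salvage the statement. So rather than there being a gap in your argument, you have located a gap in the proposition itself as recorded in the paper; the bound $d' \leq d$ should either carry the hypothesis $|\I| < d$ or be dropped.
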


\begin{definition}(Shortened Code)\\
	Let $\C \subset \F_q^n$ be a linear code, and $\I \subseteq \{1, .., n\}$ a set of coordinates.\\
The  Shortened code of $\C$ on $\I$ denoted by $Short_{\I}(\C)$ is defined by :
$$Short_{\I}(\C)=Punct_{\I}(\{ c \in \C ~|~ c_i=0, ~\forall ~i \in \I \}).$$
This code is of length $n-|\I|$.
\end{definition}
\begin{proposition}
  Let $\C \subset \F_q^n$ be a $[n,k,d]-$code and $\I \subseteq \{1, .., n\}$. Then $Short_{\I}(\C)$ is a  \\
  $[n-|\I|, k', d']-$code satisfying:
  $$ k-|I| \leq k' \leq k ~and~  d \leq d'.$$
\end{proposition}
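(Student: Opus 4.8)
The plan is to factor the definition of $Short_{\I}(\C)$ through the intermediate subcode $\C_{\I} := \{\cv \in \C : c_i = 0,\ \forall\, i \in \I\}$, so that $Short_{\I}(\C) = Punct_{\I}(\C_{\I})$ by definition, and then to read off the three claims (length, dimension range, distance bound) one at a time. The length is immediate: puncturing deletes the $|\I|$ coordinates indexed by $\I$, so every word of $Short_{\I}(\C)$ has length $n - |\I|$.

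For the dimension, I would first observe that the restriction to $\C_{\I}$ of the puncturing map $\pi_{\I} : \cv \mapsto (c_i)_{i \notin \I}$ is injective: a word of $\C_{\I}$ already vanishes on every coordinate of $\I$, so it is completely determined by its remaining coordinates, and $\pi_{\I}$ deletes only these zero entries. Hence $k' = \dim Short_{\I}(\C) = \dim \C_{\I}$, and it suffices to bound $\dim \C_{\I}$. The upper bound $k' \le k$ follows at once because $\C_{\I} \subseteq \C$ is a subspace. For the lower bound, note that $\C_{\I}$ is cut out of $\C$ by the $|\I|$ linear forms $\cv \mapsto c_i$, $i \in \I$; each such form lowers the dimension by at most one, so $\dim \C_{\I} \ge k - |\I|$, giving $k - |\I| \le k' \le k$.

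For the minimum distance, I would take any nonzero word $\wv \in Short_{\I}(\C)$ and lift it through the injective map $\pi_{\I}$ to the unique $\cv \in \C_{\I}$ with $\pi_{\I}(\cv) = \wv$. Since $\cv$ vanishes on $\I$, deleting those coordinates removes no nonzero entry, so $\wt(\wv) = \wt(\cv)$. As $\cv$ is a nonzero codeword of $\C$, its weight is at least $d$, whence $\wt(\wv) \ge d$; minimizing over all nonzero $\wv$ yields $d' \ge d$.

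The only genuinely load-bearing point is the injectivity of $\pi_{\I}$ restricted to $\C_{\I}$, which simultaneously delivers the exact dimension identity $k' = \dim \C_{\I}$ and the weight-preservation used in the distance bound; everything else is routine bookkeeping with dimensions of subspaces and a count of linear constraints. I expect no real obstacle here beyond stating that injectivity cleanly, and I would note in passing that (unlike puncturing) shortening admits no finite upper bound on $d'$, which is why the statement only asserts $d \le d'$.
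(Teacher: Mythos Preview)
Your proof is correct and is the standard argument: the injectivity of $\pi_{\I}$ on $\C_{\I}$ gives $k' = \dim \C_{\I}$, the bound on $\dim \C_{\I}$ follows from the $|\I|$ linear constraints, and weight-preservation under the removal of zero coordinates yields $d' \ge d$. The paper itself states this proposition as a preliminary fact without proof, so there is nothing to compare against; your write-up would serve perfectly well as the omitted justification.
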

\begin{theorem}(Link between Shortened and Punctured Codes [\cite{huffman_pless_2003}, Theorem 1.5.7])\label{shorpunct}\\
  Let $\C \subset \F_q^n$ be a $[n,k,d]-$code and $\I \subseteq \{1, .., n\}$. We have
   \begin{enumerate}
     \item $Short_{\I}(\C^{\bot})=Punct_{\I}(\C)^{\bot}$
     \item $Short_{\I}(\C)^{\bot}=Punct_{\I}(\C^{\bot})$
   \end{enumerate}
\end{theorem}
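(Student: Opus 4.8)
The plan is to establish the first identity by a direct double-inclusion argument at the level of individual vectors, and then to obtain the second identity almost for free by invoking the involutivity of duality, namely the standard fact that $(\C^{\bot})^{\bot} = \C$ for linear codes over a finite field. Throughout I would write $J = \{1,\dots,n\} \setminus \I$ for the set of surviving coordinates, so that every vector in a punctured or shortened code is naturally indexed by $J$ and has length $n - |\I|$.

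For the first identity $Short_{\I}(\C^{\bot}) = Punct_{\I}(\C)^{\bot}$, the key observation I would exploit is that a vector $\yv = (y_j)_{j \in J}$ lies in $Punct_{\I}(\C)^{\bot}$ precisely when $\sum_{j \in J} y_j c_j = 0$ for every $c \in \C$; this is just the definition of the dual of the punctured code, since puncturing deletes the coordinates indexed by $\I$. First I would prove the inclusion $Short_{\I}(\C^{\bot}) \subseteq Punct_{\I}(\C)^{\bot}$: any element of the left-hand side arises as $(d_j)_{j \in J}$ for some $d \in \C^{\bot}$ with $d_i = 0$ for all $i \in \I$, and for such a $d$ the orthogonality relation $\sum_{i=1}^n c_i d_i = 0$ collapses to $\sum_{j \in J} c_j d_j = 0$ because the $\I$-components vanish, giving membership on the right. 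For the reverse inclusion I would take $\yv \in Punct_{\I}(\C)^{\bot}$ and extend it by zero on $\I$ to a vector $\tilde{d} \in \F_q^n$; the hypothesis on $\yv$ says exactly that $\tilde{d}$ is orthogonal to every $c \in \C$, so $\tilde{d} \in \C^{\bot}$, and since $\tilde{d}$ vanishes on $\I$ its puncturing on $\I$ returns $\yv$, placing $\yv$ in $Short_{\I}(\C^{\bot})$.

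For the second identity I would simply apply the first identity with $\C$ replaced by $\C^{\bot}$, which yields $Short_{\I}(\C) = Punct_{\I}(\C^{\bot})^{\bot}$ after simplifying $(\C^{\bot})^{\bot} = \C$; taking the dual of both sides and again using involutivity of duality then gives $Short_{\I}(\C)^{\bot} = Punct_{\I}(\C^{\bot})$, as desired.

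The argument involves no serious obstacle; the only point requiring care is the bookkeeping of coordinate indices, since puncturing relabels the surviving positions, and one must check that the extension-by-zero-on-$\I$ map is the genuine inverse of the puncturing map on the subspace of vectors supported off $\I$. Keeping the index set $J$ explicit throughout is what makes the two halves of the inclusion argument match up cleanly.
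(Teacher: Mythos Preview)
Your proof is correct. The double-inclusion argument for the first identity is clean and complete, and deducing the second identity by substituting $\C^{\bot}$ for $\C$ and invoking $(\C^{\bot})^{\bot}=\C$ is the standard and efficient move.

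Note, however, that the paper does not supply its own proof of this theorem: it is quoted verbatim from Huffman and Pless, \emph{Fundamentals of Error-Correcting Codes}, Theorem~1.5.7, and used as a black box. Your argument is essentially the textbook proof found there, so there is nothing to compare against within the paper itself.
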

This link will allow us to have a generator matrix of a shortened code (subcode) of a code from the punctured code of its dual code.
\section{Generalized Subspace Subcodes in the Rank Metric}\label{sec:gssrm}
Subspace Subcodes in the rank-metric were introduced by Gabidulin and Loidreau in \cite{GabLoidreau2008}, where decoding methods were also proposed. In this section, we give a generalization of Subspace Subcodes in the rank-metric.
\begin{definition}[\cite{GabLoidreau2008}]\label{defss}
Let $\mathcal{C}$ be a $[n, k, d=d_R(\C)]_{q^m}-$code, and $V$ a $\F_q-$vector subspace of $\F_{q^m}$ with dimension $s \leq m$. The Subspace Subcode of $\C$, with respect to $V$, is the $\F_q-$vector space $(\C_{|V}) = \C \cap V^n.$
\end{definition}
The elements of $(\C_{|V})$ are codewords whose components lie in the alphabet formed by the subspace $V$. In general, $(\C_{|V})$ is not $\F_{q^m}-$linear, but it is $\F_q-$linear and also linear over some intermediate extension depending on $V$. This code also corresponds by projection to the Subgroup Subcode \cite{382025} on the alphabet $\F_q^s$.\\
In the following, we introduce a generalization of definition \ref{defss} by allowing the choice of different vector subspaces for each coordinate.
\begin{definition}\label{defgss}
Let $\C$ be a $[n, k, d=d_R(\C)]_{q^m}-$code, and  $V_1,...,V_n$ a series of $n$ $\F_q-$vector subspaces of $\F_{q^m}$ with dimensions respectively $s_1$, $s_2$, ..., $s_n$. Set  
$W={\displaystyle \prod_{i=1}^{n} V_i}$, the Cartesian product of $V_1,...,V_n$. The Rank Generalized Subspace Subcode of $\C$ with respect to $W$ is the $\F_q-$vector space $RGSS_W(\C) = \C \cap W.$
\end{definition}

Remark that all the $V_i$'s are equal in definition \ref{defss}, contrary to what we have in the above definition \ref{defgss}. \\
Let $B = \{b_1, b_2, ..., b_m\}$ be a basis of $\F_{q^m}$ as a $\F_q-$vector space. We consider the map $\phi_B : \F_{q^m} \longrightarrow \F^m_q$, mapping each element of $\F_{q^m}$ to its $\F_q-$coordinates in the basis $B$. That is to say, for any $x = \sum\limits_{i=1}^m x_i b_i \in \F_{q^m}$ (with $x_i \in \F_q$),  $\phi_B(x) = (x_1, x_2,...,x_m)$. Given $n$ $\F_q-$bases $B_1,..., B_n$ of $\F_{q^m}$, this map can be extended to $\F_{q^m}^n$ by the isomorphism:\\
\begin{align*}
Exp_{(B_i)_i}\colon \F_{q^m}^n & \longrightarrow\F^{mn}_q\\
(c_1, c_2,...,c_n)& \longmapsto (\phi_{B_1}(c_1), \phi_{B_2}(c_2),...,\phi_{B_n}(c_n)),
\end{align*}



$Exp_{(B_i)_i}$ is called the expansion function and, applying it to all codewords of a code $\C$, this gives a new linear code of length $nm$ called Expanded Codes.  
\begin{definition}
Let $\C$ be a linear code of length $n$ and dimension $k$ over $\F_{q^m}$. The
expanded code of $\C$ also called ($q-$ary image of $\C$) with respect to the $n$ bases $B_1,..., B_n$ of $\F_{q^m}$ is the $\F_{q}-$linear code defined by:
    $$Im_{q,(B_i)_i}(\C)=Exp_{(B_i)_i}(\C)$$
\end{definition}
Given a $[n,k]-$code $\C$ over $\F_{q^m}$, the following proposition shows how to compute a generator matrix or a parity-check matrix of the $q-$ary image of $\C$. Before stating this proposition, it is important to note that for the extended code, while the components of the codewords are expressed in $n$ different bases, it is assumed for measures of simplicity and efficiency of encoding that all the components (coordinates) of the message to be encoded are expressed in the same base $B$. Consequently, each component $i$ is obtained by a $\F_q-$linear application of $\F_{q^m}$ to $\F_{q}^m$ in the bases $B$ and $B_i$.
\begin{proposition}\label{ExtGenMat}
Let $\C$ be a $[n,k]-$code over $\F_{q^m}$ 
    \begin{enumerate}
        \item If $\Gm=\begin{pmatrix} \gv_{1}\\ 
        \gv_{2}\\ 
		 \vdots \\ 
		\gv_{k}\\ 
	\end{pmatrix} \in \F_{q^m}^{k\times n}$ is a generator matrix of $\C$ then, the matrix $\hat{\Gm}_{(B_j)_j}^B$ defined by
 \begin{equation}
     \hat{\Gm}_{(B_j)_j}^B=ExpMat_{(B_j)_j,n}^B(\Gm)=\begin{pmatrix} 
        Exp_{(B_j)_j}(b_1\gv_i)\\ 
        Exp_{(B_j)_j}(b_2\gv_i)\\ 
		 \vdots \\ 
	Exp_{(B_j)_j}(b_m\gv_i)\\ 
	\end{pmatrix}_{1\leq i\leq k} \in \F_{q}^{mk\times mn}
 \end{equation}
is a generator matrix of the expanded code of $\C$ over $\F_q$ from the basis $B = \{b_1, b_2, ..., b_m\}$ with respect to the $n$ bases $B_1,..., B_n$ of $\F_{q^m}.$

    \item If 
        $\Hm=\begin{pmatrix} \hv_{1}\\ 
        \hv_{2}\\ 
		 \vdots \\ 
		\hv_{n}\\ 
	\end{pmatrix}^T \in \F_{q^m}^{(n-k)\times n}$ is a parity-check matrix of $\C$, then the matrix $\hat{\Hm}^B_{(B_j)_j}$ defined by
       \begin{equation}
           \hat{\Hm}^B_{(B_j)_j}=ExpMat_{(B_j)_j,n-k}^B(H^T)^T=\begin{pmatrix} 
        Exp_{(B_j)_j}(b_1 \hv_i)\\ 
        Exp_{(B_j)_j}(b_2 \hv_i)\\ 
		 \vdots \\ 
	Exp_{(B_j)_j}(b_m \hv_i)\\ 
	\end{pmatrix}_{1\leq i\leq n}^T \in \F_{q}^{m(n-k)\times mn}
       \end{equation}
 
is a parity-check matrix of the expanded code of $\C$ over $\F_q$ from the basis $B = \{b_1, b_2, ..., b_m\}$ with respect to the $n-k$ bases $B_1,..., B_{n-k}$ of $\F_{q^m}$.
    \end{enumerate}
\end{proposition}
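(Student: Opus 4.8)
The plan is to handle the two parts in sequence and reduce the second to the first via duality. For part (1), I would begin with an arbitrary codeword $\cv = \sum_{i=1}^{k} x_i \gv_i$ of $\C$, with $x_i \in \F_{q^m}$, and decompose each scalar in the multiplier basis as $x_i = \sum_{l=1}^{m} x_{il} b_l$ with $x_{il}\in\F_q$. Substituting gives $\cv = \sum_{i=1}^{k}\sum_{l=1}^{m} x_{il}\,(b_l \gv_i)$, and since $Exp_{(B_j)_j}$ is $\F_q$-linear, applying it yields $Exp_{(B_j)_j}(\cv) = \sum_{i,l} x_{il}\, Exp_{(B_j)_j}(b_l \gv_i)$. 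This exhibits every expanded codeword as an $\F_q$-linear combination of the rows of $\hat{\Gm}_{(B_j)_j}^B$; conversely each such row is the expansion of a genuine codeword $b_l \gv_i \in \C$, so the row space of $\hat{\Gm}_{(B_j)_j}^B$ equals $Im_{q,(B_i)_i}(\C)$. It then remains to see that these $mk$ rows are independent, which comes for free: $Exp_{(B_j)_j}$ is an isomorphism, so $\dim_{\F_q} Im_{q,(B_i)_i}(\C) = mk$, and a spanning family whose size equals the dimension is a basis.

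For part (2), the crucial remark is that $ExpMat$ applies the exact same operation to the rows of $\Hm$, and those rows generate $\C^{\perp}$. Applying part (1) to $\C^{\perp}$ (a code of dimension $n-k$) therefore shows at once that $\hat{\Hm}^B_{(B_j)_j}$ is a generator matrix of the expanded code $Im_{q,(B_i)_i}(\C^{\perp})$, hence has $m(n-k)$ independent rows. To upgrade this to a parity-check matrix of $Im_{q,(B_i)_i}(\C)$, I would establish the containment $Im_{q,(B_i)_i}(\C^{\perp}) \subseteq \big(Im_{q,(B_i)_i}(\C)\big)^{\perp}$ and then count dimensions: since $m(n-k) + mk = mn$, the containment forces equality, and a generator matrix of the dual code is precisely a parity-check matrix.

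The heart of the matter, and the step I expect to be the \emph{main obstacle}, is the containment above, i.e. the compatibility of expansion with duality. Concretely, for $\hv \in \C^{\perp}$ and $\cv \in \C$ one evaluates the $\F_q$-pairing $\langle Exp_{(B_j)_j}(b_l\hv),\, Exp_{(B_j)_j}(\cv)\rangle$ block by block and wants it to collapse to $\mathrm{Tr}_{\F_{q^m}/\F_q}\!\big(b_l\,(\hv\cdot\cv)\big) = \mathrm{Tr}_{\F_{q^m}/\F_q}(b_l\cdot 0)=0$, using $\hv\cdot\cv = 0$ over $\F_{q^m}$. This collapse is not automatic: the per-coordinate identity $\langle \phi_{B_j}(u),\phi_{B_j}(v)\rangle = \mathrm{Tr}_{\F_{q^m}/\F_q}(uv)$ holds only when $u$ and $v$ are read against mutually trace-dual bases. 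I would therefore make explicit that the parity-check side must be expanded in the trace-dual bases of those fixed for the generator side (equivalently, one restricts to self-dual bases), and record the standard identity $\mathrm{Tr}_{\F_{q^m}/\F_q}(uv)=\sum_{l} \phi_{B}(u)_l\,\phi_{B^{*}}(v)_l$ as the lemma driving the computation. Once this basis bookkeeping is pinned down, $\F_q$-bilinearity of the trace closes part (2) immediately.
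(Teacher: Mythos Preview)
Your part (1) is correct and matches the paper's argument; the paper merely packages the same computation through the auxiliary linear maps $\theta_{\gv_i}=Exp_{(B_j)_j}\circ f_{\gv_i}\circ Exp_B^{-1}$ and arrives at the identical spanning family $\{Exp_{(B_j)_j}(b_l\gv_i)\}$.

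For part (2), however, you have misread the construction, and this sends you down a harder road than necessary. The matrix $\hat{\Hm}^B_{(B_j)_j}=ExpMat^B_{(B_j)_j,n-k}(\Hm^T)^T$ is built by expanding the \emph{columns} $\hv_1,\dots,\hv_n\in\F_{q^m}^{n-k}$ of $\Hm$ (hence the index range $1\le i\le n$ and the use of only $n-k$ bases), not its rows. So $\hat{\Hm}^B_{(B_j)_j}$ is \emph{not} what one obtains by applying part (1) to the generator $\Hm$ of $\C^\perp$, and your duality-plus-dimension-count strategy is aimed at a different statement. The trace-dual obstacle you flag is real for that alternative statement, but the proposition as written does not need it at all.

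The paper's route is a direct syndrome computation: reuse the encoding identity from part (1) with $\Hm^T$ playing the role of the ``generator'' of a length-$(n-k)$ auxiliary code. For any $\yv\in\F_{q^m}^n$,
\[
Exp_{(B_j)_j}(\yv\Hm^T)=Exp_{B,n}(\yv)\cdot ExpMat^B_{(B_j)_j,n-k}(\Hm^T)=\bigl(\hat{\Hm}^B_{(B_j)_j}\cdot Exp_{B,n}(\yv)^T\bigr)^T,
\]
so $\hat{\Hm}^B_{(B_j)_j}$ annihilates $Exp_{B,n}(\yv)$ exactly when $\yv\Hm^T=0$, i.e.\ when $\yv\in\C$. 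No trace pairing enters. A side effect worth noting: the code annihilated is $Im_{q,B}(\C)$, with codeword coordinates expanded in the single basis $B$; the $n-k$ bases $B_1,\dots,B_{n-k}$ only parameterise how the syndrome is expanded and are unrelated to the $n$ bases appearing in part (1).
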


\begin{proof}
For the first point, let us consider for any line $i \in \{1, \cdots k\}$ of $\Gm$, the map 
\begin{align*}
f_{\gv_i} \colon \F_{q^m} & \longrightarrow\F_{q^m}^{n}\\
x &\longmapsto f_{\gv_i}(x) = x \gv_i,
\end{align*}
Since $Exp_B:=\phi_B$ is an $\F_q-$isomorphism from $\F_{q^m}$ to $\F_q^m$, we consider the $\F_q-$linear map $\theta_{\gv_i} = Exp_{(B_j)_j} \circ f_{\gv_i} \circ Exp_{B}^{-1}$ from $\F_q^m$ to $\F_q^{mn}$. Let $B^* = (b_1^*,...,b_m^*)$ be a basis of $\F_q^m$ such that for any $i \in \{1, \cdots, m\}$,  $b^*_i = Exp_{B}(b_i)$.\\ As the encoding is considered as a vector-matrix product, the matrix of $\theta_{\gv_i}$ in the basis $B^*$ is given by a matrix $\Mm_{\gv_i} \in \F_q^{m\times mn}$ such that:
\[
\Mm_{\gv_i} =    
        \begin{pmatrix} 
        \theta_{\gv_i}(b^*_1)\\ 
        \theta_{\gv_i}(b^*_2)\\ 
		 \vdots \\ 
	\theta_{\gv_i}(b^*_m) 
	\end{pmatrix} 
         = 
        \begin{pmatrix} 
        Exp_{(B_j)_j} \circ f_{\gv_i} \circ Exp_{B}^{-1}(b^*_1)\\ 
        Exp_{(B_j)_j} \circ f_{\gv_i} \circ Exp_{B}^{-1}(b^*_2)\\ 
		 \vdots \\ 
	Exp_{(B_j)_j} \circ f_{\gv_i} \circ Exp_{B}^{-1}(b^*_m)
	\end{pmatrix}
        = 
        \begin{pmatrix} 
        Exp_{(B_j)_j}(b_1 \gv_i)\\ 
        Exp_{(B_j)_j}(b_2 \gv_i)\\ 
		 \vdots \\ 
	Exp_{(B_j)_j}(b_m \gv_i) 
	\end{pmatrix} 
\]
To finish, we have by definition
\[
\begin{tabular}{ccl}
    $Exp_{(B_j)_j} (\C)$ & $=$ & $\{Exp_{(B_j)_j}(\mv \Gm), m \in \F_{q^m}^k \}$ \\
                      & $=$ & $\{Exp_{(B_j)_j}(\sum_{i=1}^{k} m_i \gv_i), \mv = (m_1, \cdots, m_k) \in \F_{q^m}^k \}$  \\
                      &  $=$  & $\{\sum_{i=1}^{k} Exp_{(B_j)_j} \circ f_{\gv_i} ( m_i), \mv = (m_1, \cdots, m_k) \in \F_{q^m}^k \}$ \\
                      &  $=$  & $\{\sum_{i=1}^{k} \theta_{\gv_i} \circ Exp_{B} ( m_i), \mv = (m_1, \cdots, m_k) \in \F_{q^m}^k \}$ \\
                      &  $=$  & $\{ \sum_{i=1}^{k} \theta_{\gv_i} ( \mv_i^ \prime), \mv^\prime = (\mv_1^\prime, \cdots, \mv_k^\prime) \in (\F_{q}^{m})^{k} \}$\\
                      &  $=$  & $\{ \sum_{i=1}^{k}  \mv_i^ \prime \Mm_{\gv_i}, \mv^\prime = (\mv_1^\prime, \cdots, \mv_k^\prime) \in (\F_{q}^{m})^{k} \}$\\
                      &  $=$  & $\{  \mv^\prime\cdot \begin{pmatrix} 
                                \Mm_{\gv_1} \\ 
                                \Mm_{\gv_2}\\ 
                                 \vdots \\ 
                                \Mm_{\gv_k} 
                                \end{pmatrix}, ~~ \mv^\prime \in \F_{q}^{km} \}$\\
                      &  $=$  & $<\begin{pmatrix} 
                                \Mm_{\gv_1} \\ 
                                \Mm_{\gv_2}\\ 
                                 \vdots \\ 
                                \Mm_{\gv_k} 
                                \end{pmatrix}>_{\F_q}$
\end{tabular}
\]
The proof of the second point  of the proposition is similar. Indeed, by considering only the $n-k$ bases $B_1, ..., B_{n-k}$ of $\F_{q^m}$, let us apply $Exp_{(B_j)_j}$ on a syndrome $\sv$ given by $\sv^T= \yv \cdot H^T$. We have\\

\begin{tabular}{lll}
   $Exp_{(B_j)_j} (\sv^T)$  & $=$  & $Exp_{(B_j)_j} (\yv \cdot H^T)$ \\
      & $=$ & $Exp_{B} (\yv)\cdot ExpMat_{(B_j)_j,n-k}^B(H^T)$ \\
      & $=$ & $ExpMat_{(B_j)_j,n-k}^B(H^T)^T \cdot Exp_{B} (\yv)^T $\\
      & $=$ &  $\hat{\Hm}^B_{(B_j)_j} \cdot Exp_{B} (\yv)^T$
\end{tabular}

\qed
\end{proof}
One can remark that this proposition is a general case of \cite[Theorem 1]{5695134}, that can be obtained from the above proposition by taking $B_1 = B_2= \cdots = B_n$. Furthermore, using the notations in the above proof, one can remark that for any $i \in \{1, \cdots, k\}$, $f_{\gv_i} = (f_{g_{i1}}, f_{g_{i2}}, \cdots, f_{g_{in}} )$. So, the matrix  $\hat{\Gm}_{(B_j)_j}^B$ can be written as follows : 

\[
    \hat{\Gm}_{(B_j)_j}^B = (\Mm_{g_{ij}})_{1 \leq i\leq k, 1 \leq j\leq n}  = \begin{pmatrix} 
        \Mm_{g_{11}} & \cdots & \Mm_{g_{1n}}  \\ 
        \Mm_{g_{21}} & \cdots & \Mm_{g_{2n}}  \\ 
		 \vdots   & \ddots & \vdots    \\ 
        \Mm_{g_{k1}} & \cdots & \Mm_{g_{kn}}   
	\end{pmatrix} \in (\F_{q}^{m\times m})^{k\times n}
\]
where the matrix $\Mm_{g_{ij}}^T =\mathcal{M}_{B,B_j}(\theta_{g_{ij}})$ is the matrix of the linear map $\theta_{g_{ij}}$ from the basis $B$ to the basis $B_j$, for $i \in \{1, \cdots k\}$ and $j \in \{1, \cdots n \}$. If $B_1 = B_2= \cdots = B_n = B$, we define $\hat{\Gm}_B := \hat{\Gm}_{(B_j)_j}^B$ and also $Exp_{B,n} := Exp_{(B_j)_j}$.

To build a subcode on the subspaces, the basic idea is to extend the code on a primitive field $\F_q$, then carry out changes of bases on each window of length $m$ of the dual and finally move on to puncturing to drop unnecessary columns to obtain the dual of the subcode sought.
Before giving a very useful proposition for the sequel, let us start by the following definition and notations.
\begin{definition}
Let $U \subseteq \{1,2,...,nm\}$.
We denote by $S_U$ (respectively $P_U$) the operation of shortening (respectively puncturing) the $q-$ary image of $\C$ on positions $I_U= \{1,2,...,nm\} \setminus U$, i.e. 
$S_{U}(\C) := Short_{I_U}(Im_{q,(B_j)_j}(\C))$ and $P_U(\C) := Punct_{I_U}(Im_{q,(B_j)_j}(\C))$. 
\end{definition}

We then have the following proposition.  
\begin{proposition} \label{short}
Let $\C$ be a linear code of length $n$ over $\F_{q^m}$, $B = \{b_1, b_2, ..., b_m\}$ be a $\F_q-$basis of $\F_{q^m}$, $B_s = \{ b_1,..., b_s \}$ and $V=<b_1,..., b_s>$ an $s-$dimensional $\F_q-$subspaces of $\F_{q^m}$. Consider the set $U=u_{1}\cup ... \cup u_{n} \subseteq \{1,2,...,nm\} $ such that for $1 \leq j\leq n$, $u_j=\{(j-1)m+1,..., (j-1)m+s\}$.
The code $S_U(\C)$ is the $q-$ary image of the Subspace Subcode of $\C$ over $V$ with respect to the basis $B_s$.
\end{proposition}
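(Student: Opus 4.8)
The plan is to unwind the expansion map and the shortening operation block-by-block and check that the two sides describe the same $\F_q$-linear code; the argument is essentially coordinate bookkeeping on the $nm$ positions. First I would fix the indexing convention. Since all the bases coincide, $B_1 = \cdots = B_n = B$, we have $Im_{q,(B_j)_j}(\C) = Exp_{B,n}(\C)$, and for a codeword $\cv = (c_1,\dots,c_n) \in \C$ with $c_j = \sum_{t=1}^m c_{tj} b_t$, the vector $Exp_{B,n}(\cv)$ has as its $j$-th block of $m$ coordinates the tuple $(c_{1j},\dots,c_{mj})$. Hence the coordinate at position $(j-1)m + t$ is exactly the coefficient of $b_t$ in the component $c_j$. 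With this convention, the set $U$ gathers, within each block $j$, the first $s$ positions, i.e. the coefficients of $b_1,\dots,b_s$, so the complement $I_U$ gathers precisely the positions of the coefficients of $b_{s+1},\dots,b_m$.

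Next I would analyze the inner stage of the shortening, namely the subcode $\{\xv \in Im_{q,(B_j)_j}(\C) : x_i = 0 \text{ for all } i \in I_U\}$. By the indexing above, the vanishing condition says exactly that for every $j$ the coefficients of $b_{s+1},\dots,b_m$ in $c_j$ are zero, that is $c_j \in \langle b_1,\dots,b_s\rangle = V$. Since $Exp_{B,n}$ is an $\F_q$-isomorphism, this subcode is therefore $Exp_{B,n}(\C \cap V^n) = Exp_{B,n}(\C_{|V})$, using Definition~\ref{defss}.

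Finally I would apply the puncturing $Punct_{I_U}$, which deletes the coordinates indexed by $I_U$ (all of which vanish on the subcode just described) and retains those indexed by $U$. For $\cv \in \C_{|V}$ the surviving coordinates in block $j$ are $(c_{1j},\dots,c_{sj})$, which are precisely the coordinates of $c_j \in V$ in the basis $B_s = \{b_1,\dots,b_s\}$; concatenating over $j$ gives the expansion of $\cv$ with respect to $B_s$. Thus $S_U(\C) = Punct_{I_U}\!\left(Exp_{B,n}(\C_{|V})\right)$ is exactly the $q$-ary image of $\C_{|V}$ over $V$ with respect to $B_s$, as claimed. I do not expect a genuine \emph{obstacle} here beyond this bookkeeping; the one point deserving care is that the vanishing of the $I_U$-coordinates is \emph{equivalent} to membership in $V^n$, not merely necessary — this relies on $B_s = \{b_1,\dots,b_s\}$ being an initial segment of the fixed basis $B$, so that the $B_s$-expansion of an element of $V$ coincides with the truncation of its $B$-expansion.
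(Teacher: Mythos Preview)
Your proof is correct and follows essentially the same approach as the paper: identify the pre-puncturing subcode of $Exp_{B,n}(\C)$ with $Exp_{B,n}(\C\cap V^n)$ via the characterization $V=\phi_B^{-1}(\{(x_1,\dots,x_s,0,\dots,0)\})$, then observe that puncturing at $I_U$ yields $Exp_{B_s,n}(\C\cap V^n)$. Your write-up is in fact somewhat more careful than the paper's, in particular your explicit remark that the equivalence (not just one inclusion) between the vanishing condition and membership in $V^n$ relies on $B_s$ being an initial segment of $B$.
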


\begin{proof}
Let $\yv=(y_{1,1},..., y_{s,1}, ..., y_{1,j},..., y_{s,j}, ..., y_{1,n},..., y_{s,n}) \in S_U(\C)$ then there exists  $c$ in $Im_q(\C) = Exp_{B,n}(\C)$ such that $\yv=Punct_{I_U}(c)$ and satisfying $c_i=0$ for any $i \notin U$. 
As $V=\phi_{B}^{-1}(\{(x_1,..., x_{s}, 0,...,0) |  x_1,..., x_{s} \in \F_q \})$,
 $Exp_{B,n}^{-1}(c) \in \C\cap V^n$ and so, $c \in Exp_{B,n}(\C \cap V^n)$. Applying $Punct_{I_U},$ we have 
 
 $$\yv \in Punct_{I_U}(Exp_{B,n}(\C \cap V^n)) = Exp_{B_s,n}(\C \cap V^n))$$ 

 Conversely, it is obvious that $Exp_{B_s,n}(\C \cap V^n)) \subset S_U(\C).$
 That means
$$S_U(\C)=Exp_{B_s,n}(\C \cap V^n))=Im_{q,B_s}(\C \cap V^n) $$ \qed 
\end{proof} 

The above proposition shows that shortening the $q-$ary image of a code $\C$ gives the $q-$ary image of a given Subspace Subcode of $\C$. 
\begin{theorem}\cite[lemma 34]{couvreur:hal-02938812}  \label{theo34}
Let $\C$ be a linear code of length $n$ over $\F_{q^m}$, $B = \{b_1, b_2, ..., b_m\}$ be a basis of $\F_{q^m}$. Let $(\Qm_j)_j \in (GL_m(\F_q))^n$. The following equalities hold.
$$Im_{q,(\Qm_j^{-1}B)_j}(\C)=Exp_{(\Qm_j^{-1}B)_j}(\C)=Exp_{B,n}(\C)\cdot\begin{pmatrix} 
        \Qm_1&&\\ 
		 &\ddots& \\ 
	&& \Qm_n\\ 
	\end{pmatrix},$$
 $$\hat{\Gm}_{(B_j)_j}^B=\hat{\Gm}_B\cdot\begin{pmatrix} 
        \Qm_1&&\\ 
		 &\ddots& \\ 
	&& \Qm_n\\ 
	\end{pmatrix}, and$$

 $$\hat{\Hm}^B_{(B_j)_j}=\hat{\Hm}_B\cdot\begin{pmatrix} 
        (\Qm_1^{-1})^T&&\\ 
		 &\ddots& \\ 
	&& (\Qm_n^{-1})^T\\ 
	\end{pmatrix}$$
 
\end{theorem}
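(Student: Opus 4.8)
The plan is to reduce all three identities to a single coordinate-wise change-of-basis formula and then propagate it through the expansion maps.

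First I would establish the elementary fact that, for a single coordinate, replacing the expansion basis $B$ by $\Qm^{-1}B$ multiplies the coordinate row vector by $\Qm$ on the right. Interpreting $\Qm^{-1}B$ as the basis $\mathbf{b}'=\Qm^{-1}\mathbf{b}$ obtained from the column $\mathbf{b}=(b_1,\dots,b_m)^T$, every $x\in\F_{q^m}$ satisfies $x=\phi_B(x)\,\mathbf{b}=\phi_{\Qm^{-1}B}(x)\,\mathbf{b}'=\phi_{\Qm^{-1}B}(x)\,\Qm^{-1}\mathbf{b}$, and since the $b_i$ are $\F_q$-linearly independent this forces
\[
\phi_{\Qm^{-1}B}(x)=\phi_B(x)\,\Qm .
\]
This is the one computation that must be pinned down carefully: getting the direction right, namely that the basis change by $\Qm^{-1}$ induces right multiplication by $\Qm$ rather than by $\Qm^{-1}$ or its transpose, is where almost all of the risk of error lies, and everything else is bookkeeping built on top of it.

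Next I would apply this identity coordinate by coordinate. For $\cv=(c_1,\dots,c_n)\in\F_{q^m}^n$, concatenating $\phi_{\Qm_j^{-1}B}(c_j)=\phi_B(c_j)\,\Qm_j$ over $j=1,\dots,n$ gives exactly $Exp_{(\Qm_j^{-1}B)_j}(\cv)=Exp_{B,n}(\cv)\cdot D$, where $D$ is the block-diagonal matrix with blocks $\Qm_1,\dots,\Qm_n$. Applying this to every codeword yields the first displayed equality $Im_{q,(\Qm_j^{-1}B)_j}(\C)=Exp_{B,n}(\C)\cdot D$. For the generator-matrix identity I would invoke Proposition \ref{ExtGenMat}: each row of $\hat{\Gm}_{(\Qm_j^{-1}B)_j}^B$ is $Exp_{(\Qm_j^{-1}B)_j}(b_\ell\gv_i)$, which by the vector identity equals $Exp_{B,n}(b_\ell\gv_i)\cdot D$, i.e. the corresponding row of $\hat{\Gm}_B\cdot D$; stacking the rows gives $\hat{\Gm}_{(\Qm_j^{-1}B)_j}^B=\hat{\Gm}_B\cdot D$, an equality at the level of matrices and not merely of generated codes.

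Finally, for the parity-check identity I would argue by orthogonality and rank. Set $D'=\mathrm{diag}\big((\Qm_1^{-1})^T,\dots,(\Qm_n^{-1})^T\big)$. Using $D'^{T}=\mathrm{diag}(\Qm_1^{-1},\dots,\Qm_n^{-1})=D^{-1}$ together with the generator identity just proved,
\[
(\hat{\Gm}_B\,D)\,(\hat{\Hm}_B\,D')^{T}=\hat{\Gm}_B\,D\,D'^{T}\,\hat{\Hm}_B^{T}=\hat{\Gm}_B\,\hat{\Hm}_B^{T}=\mathbf{0},
\]
since $\hat{\Hm}_B$ is a parity-check matrix for $\hat{\Gm}_B$. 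As $D'$ is invertible, $\hat{\Hm}_B\,D'$ keeps full rank $m(n-k)$, so it is a parity-check matrix of the code generated by $\hat{\Gm}_B\,D=\hat{\Gm}_{(\Qm_j^{-1}B)_j}^B$; this is precisely the claimed $\hat{\Hm}^B_{(\Qm_j^{-1}B)_j}=\hat{\Hm}_B\cdot D'$. I would note that the transpose appearing in the blocks of $D'$ is forced by the convention that the syndrome is taken as $\yv\,\Hm^{T}$, exactly as in the proof of Proposition \ref{ExtGenMat}, and I would double-check that the pairing of $\Qm_j$ on the generator side with $(\Qm_j^{-1})^{T}$ on the parity-check side is consistent before concluding.
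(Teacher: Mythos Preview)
The paper does not supply a proof of this theorem: it is stated with a citation to \cite[lemma 34]{couvreur:hal-02938812} and immediately followed by a corollary, so there is nothing in the paper to compare your argument against line by line.

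That said, your argument is sound and is the natural one. The single-coordinate identity $\phi_{\Qm^{-1}B}(x)=\phi_B(x)\,\Qm$ is derived correctly, and propagating it through $Exp$ gives the first two equalities exactly as you describe. For the parity-check part, one small remark: your orthogonality-plus-rank computation shows that $\hat{\Hm}_B\,D'$ is \emph{a} parity-check matrix of the code generated by $\hat{\Gm}_B\,D$, which is what the theorem is really asserting. If one insisted on reading $\hat{\Hm}^B_{(B_j)_j}$ as the specific matrix produced by Proposition~\ref{ExtGenMat}, one could also obtain the equality directly by applying the coordinate identity to each block $\Mm_{h_{ij}}$ just as you did for $\hat{\Gm}$; but the paper's own notation here is loose (note the $(B_j)_j$ in the theorem indexes $n$ bases on the codeword side, whereas in Proposition~\ref{ExtGenMat}(2) it indexes $n-k$ bases on the syndrome side), and your interpretation is the intended one.
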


\begin{corollary}
Let $\C$ be a linear code of length $n$ over $\F_{q^m}$, $B = \{b_1, b_2, ..., b_m\}$ be a basis of $\F_{q^m}$ and $V=<d_1,..., d_s>_{\F_q} $ a $s-$dimensional subspaces of $\F_{q^m}$. The $q-$ary image of the Subspace Subcode $(\C_{|V}) = \C \cap V^n$ with respect to $B$, can be expressed as $S_U(\C)$ on any completed basis $D = \{d_1, d_2, ...,d_s, ...,  d_m\}$ of $\F_{q^m}$
\end{corollary}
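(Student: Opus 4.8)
The plan is to reduce the corollary to Proposition~\ref{short} by a single linear-algebra observation. Proposition~\ref{short} is established only for a subspace that is spanned by the \emph{first} $s$ vectors of the expansion basis; the sole obstruction to applying it to an arbitrary $V$ is that a generic basis $B$ need not have $V$ as the span of its first $s$ elements. I would remove this obstruction by choosing an expansion basis adapted to $V$, namely a completion of a basis of $V$.

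First I would invoke the basis-extension theorem. Since $\dim_{\F_q} V = s$, the family $\{d_1, \ldots, d_s\}$ is $\F_q$-linearly independent in $\F_{q^m}$ and can therefore be completed to an $\F_q$-basis $D = \{d_1, \ldots, d_s, d_{s+1}, \ldots, d_m\}$ of $\F_{q^m}$. By construction $V = \langle d_1, \ldots, d_s \rangle_{\F_q}$ is exactly the span of the first $s$ vectors of $D$, and $D_s := \{d_1, \ldots, d_s\}$ is a basis of $V$. This is the crucial point: the adapted basis $D$ places $V$ in the position required by Proposition~\ref{short}.

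Then I would apply Proposition~\ref{short} verbatim, with $D$ playing the role of $B$ and $D_s$ the role of $B_s$; its hypotheses are met because they only ask that the subspace be generated by the first $s$ elements of the expansion basis, which now holds. Taking $U = u_1 \cup \cdots \cup u_n$ with $u_j = \{(j-1)m+1, \ldots, (j-1)m+s\}$ exactly as there, the proposition yields $S_U(\C) = Im_{q, D_s}(\C \cap V^n)$, that is, the $q$-ary image of the subspace subcode $(\C_{|V})$ read off on the completed basis $D$. As $\{d_1, \ldots, d_s\}$ is an arbitrary basis of $V$ and the completion $D$ is arbitrary, the identity holds on any such $D$, which is the assertion.

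The only genuinely delicate point is the choice of expansion basis, since Proposition~\ref{short} produces the image with respect to the adapted basis $D$, whereas the corollary also mentions a fixed basis $B$. If one insists on reading the image in $B$ rather than in $D_s$, the two are linked by the block-diagonal change of basis of Theorem~\ref{theo34}: writing $D = \Qm^{-1} B$ for a suitable $\Qm \in \GL_m(\F_q)$ gives $Im_{q,(\Qm^{-1}B)_j}(\C) = Exp_{B,n}(\C)\cdot \mathrm{diag}(\Qm, \ldots, \Qm)$, so the shortening positions and the resulting subcode transform accordingly. I expect this bookkeeping to be the only step requiring care; everything else is a direct specialization of Proposition~\ref{short} combined with basis completion.
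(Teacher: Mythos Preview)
Your proposal is correct and follows essentially the same route as the paper: apply Proposition~\ref{short} with the completed basis $D$ in place of $B$ to get $S_U(\C)=Exp_{D_s,n}(\C\cap V^n)$, then pass from $D$ to the fixed basis $B$ via the block change-of-basis of Theorem~\ref{theo34}. The only difference is cosmetic: the paper pushes the change-of-basis computation one step further, writing $S_U(\C)=Exp_{B,n}(\C\cap V^n)\cdot\mathrm{diag}(\Qm_u,\ldots,\Qm_u)$ with $\Qm_u\in\F_q^{m\times s}$ the submatrix of $\Qm$ on the first $s$ columns, whereas you stop at noting the existence of the transformation.
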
 

\begin{proof}
   According to Proposition \ref{short} and letting $D_s=\{d_1,d_2,...,d_s\}$, we have $S_U(\C) = Exp_{D_s,n}(\C \cap V^n) = Punct_{I_U}(Exp_{D,n}(\C \cap V^n))$. As $QD=B$ with $Q$ be the change-of-basis matrix from the basis $B$ to the basis $D$, then
$$ {\tiny S_U(\C)=Punct_{I_U} \left(Exp_{Q^{-1}B,n}(\C \cap V^n)\right) = Punct_{I_U} \left(Exp_{B,n}(\C \cap V^n)\cdot\begin{pmatrix} 
        Q&&\\ 
		 &\ddots& \\ 
	&& Q\\ 
	\end{pmatrix} \right) .}$$
Let $\Qm_{u} \in \F_{q}^{m\times s}$ be the sub-matrix of $\Qm$, obtained by keeping all the columns of $\Qm$ with indexes in $u=\{1,...,s\}$. Then,
$$S_U(\C) = Exp_{B,n}(\C \cap V^n)\cdot\begin{pmatrix} 
        \Qm_{u} &&\\ 
		 &\ddots& \\ 
	&& \Qm_{u} \\ 
	\end{pmatrix} .$$ \qed 
\end{proof}
More generally, we have the following corollary.
\begin{corollary}
    Let $\C$ be a linear code of length $n$ over $\F_{q^m}$ and $B$ be a $\F_q-$basis of $\F_{q^m}$. For any $i \in \{1,\cdots m \}$, let  $V_i=<d_{1,i},..., d_{s_i,i}>$ be a $s_i-$dimensional $\F_q-$subspaces of $\F_{q^m}$ and $W={\displaystyle \prod_{i=1}^{n} V_i}$. Then, the $q-$ary image of $RGSS_W(\C) = \C \cap W$ can be expressed as $S_U(\C)$ from any basis $B = \{b_1, b_2, ..., b_m\}$ with respect to the $n$ completed bases $B_1,..., B_n$ from those of $V_i$ respectively.
\end{corollary}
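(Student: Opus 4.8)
The plan is to follow the proof of the preceding corollary almost verbatim, the only change being that each of the $n$ coordinate blocks is now handled with its own completed basis and its own window length. The key observation is that both Proposition \ref{short} and Theorem \ref{theo34} act block by block, so nothing in them actually forces the subspaces $V_i$ to coincide; replacing the uniform $V$ by the product $W = \prod_{i=1}^{n} V_i$ costs no new idea.

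First I would, for each coordinate $i \in \{1,\dots,n\}$, complete the generating family $\{d_{1,i},\dots,d_{s_i,i}\}$ of $V_i$ into a full $\F_q$-basis $B_i = \{d_{1,i},\dots,d_{s_i,i},d_{s_i+1,i},\dots,d_{m,i}\}$ of $\F_{q^m}$, and write $B_{s_i,i} = \{d_{1,i},\dots,d_{s_i,i}\}$ for its truncation, so that $V_i = \langle B_{s_i,i}\rangle_{\F_q}$. These are the $n$ completed bases of the statement. I would then take the $q$-ary image of $\C$ with respect to $(B_j)_j$ and set the shortening set to $U = u_1 \cup \dots \cup u_n$ with $u_j = \{(j-1)m+1,\dots,(j-1)m+s_j\}$; note that, in contrast with Proposition \ref{short}, the window length $s_j$ now varies with $j$.

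The heart of the argument is the generalized equality $S_U(\C) = Im_{q,(B_{s_j,j})_j}(\C \cap W)$, whose proof reproduces that of Proposition \ref{short} word for word. A word $\yv \in S_U(\C)$ comes from some $c \in Im_{q,(B_j)_j}(\C)$ with $c_i = 0$ for all $i \notin U$ and $\yv = Punct_{I_U}(c)$; the vanishing of $c$ outside $U$ says precisely that, in the $j$-th length-$m$ block, the last $m-s_j$ coordinates of $\phi_{B_j}$ are zero, i.e. the $j$-th entry of $Exp_{(B_j)_j}^{-1}(c)$ lies in $\langle d_{1,j},\dots,d_{s_j,j}\rangle = V_j$. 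Hence $Exp_{(B_j)_j}^{-1}(c) \in \C \cap \prod_j V_j = \C \cap W$, and puncturing away the zero positions $I_U$ yields $\yv \in Exp_{(B_{s_j,j})_j}(\C \cap W)$. The reverse inclusion is immediate, since expanding any word of $\C \cap W$ in the full bases $(B_j)_j$ automatically produces zeros on $I_U$.

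Finally, to phrase the result relative to the fixed reference basis $B$ as the statement requests, I would invoke Theorem \ref{theo34} blockwise: for each $i$ let $\Qm_i \in \GL_m(\F_q)$ satisfy $\Qm_i B_i = B$ (equivalently $B_i = \Qm_i^{-1}B$), so that $Exp_{(B_j)_j}(\C \cap W) = Exp_{B,n}(\C \cap W)\cdot \mathrm{diag}(\Qm_1,\dots,\Qm_n)$; keeping only the first $s_j$ columns $\Qm_{j,u}$ of each block then exhibits $S_U(\C) = Exp_{B,n}(\C \cap W)\cdot \mathrm{diag}(\Qm_{1,u},\dots,\Qm_{n,u})$, exactly as in the single-subspace corollary. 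I expect the only genuine difficulty to be notational bookkeeping, namely keeping the varying $s_j$, the $n$ distinct completed bases, and the truncated block-diagonal change-of-basis matrix mutually consistent across coordinates; conceptually the proof rests entirely on the coordinate-local character of Proposition \ref{short} and Theorem \ref{theo34}.
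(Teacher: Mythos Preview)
Your proposal is correct and is precisely the intended argument: the paper does not spell out a proof for this corollary, but presents it as the immediate generalization of the preceding single-subspace corollary, whose proof you have faithfully extended coordinate-by-coordinate using Proposition~\ref{short} and Theorem~\ref{theo34} with varying window lengths $s_j$ and distinct change-of-basis matrices $\Qm_j$. Nothing further is needed.
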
 

By using the relation between shortened codes and punctured codes, one can build the generator matrix of a Subcode on the subspace. A generator matrix of $Exp_{(\Qm_j^{-1}B)_j}(\C \cap V^n)$ can be computed from its parity check matrix defined by $Punct_{I_U}(\hat{\Hm}_{(\Qm_j^{-1}B)_j})$ according to the theorem \ref{shorpunct}.\\

In what follows we give the lower bound of the cardinality of a generalized subcode on the subspaces in metric rank according to the dimensions of the subspaces.
\begin{proposition}\label{prop1}
Let $\mathcal{C}$ be a $[n, k, d=d_R(\C)]_{q^m}-$code and $V_1,...,V_n$ be a series of $n$ $\F_q-$subspaces of $\F_{q^m}$ with dimensions respectively $s_1, \cdots, s_n \leq m$. Set  
$W={\displaystyle \prod_{i=1}^{n} V_i}$, constituted of $n-$tuples with the $i-$th coordinate in $V_i$.
If $\sum\limits_{i=1}^{n}s_i - m(n-k) > 0$ then:
$$q^{\sum\limits_{i=1}^{n}s_i - m(n-k)} \leq |RGSS_W(\C)|$$
\end{proposition}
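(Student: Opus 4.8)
The plan is to forget the $\F_{q^m}$-linear structure entirely and work with everything as $\F_q$-vector spaces inside the common ambient space $\F_{q^m}^n$, then apply the Grassmann dimension formula for the intersection of two subspaces. First I would record the three relevant $\F_q$-dimensions. Since $\C$ is a $k$-dimensional $\F_{q^m}$-subspace of $\F_{q^m}^n$ and $\F_{q^m}$ has $\F_q$-dimension $m$, the code $\C$ has $\F_q$-dimension $mk$; the ambient space $\F_{q^m}^n$ has $\F_q$-dimension $mn$; and because $W=\prod_{i=1}^n V_i$ consists of the tuples whose $i$-th coordinate ranges over the $s_i$-dimensional space $V_i$, independently across the $n$ coordinates, $W$ is an $\F_q$-subspace of $\F_{q^m}^n$ of dimension $\sum_{i=1}^n s_i$.

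Next I would regard both $\C$ and $W$ as $\F_q$-subspaces of $\F_{q^m}^n$ and invoke the identity
\[
\dim_{\F_q}(\C \cap W) = \dim_{\F_q}\C + \dim_{\F_q}W - \dim_{\F_q}(\C + W).
\]
Since $\C + W$ is itself an $\F_q$-subspace of $\F_{q^m}^n$, its $\F_q$-dimension is at most $mn$. Substituting the three dimensions computed above and using this upper bound on $\dim_{\F_q}(\C+W)$ gives
\[
\dim_{\F_q}\big(RGSS_W(\C)\big) = \dim_{\F_q}(\C \cap W) \geq mk + \sum_{i=1}^{n} s_i - mn = \sum_{i=1}^{n} s_i - m(n-k).
\]

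Finally, because $RGSS_W(\C)=\C\cap W$ is an $\F_q$-vector space, its cardinality is exactly $q$ raised to its $\F_q$-dimension, so monotonicity of $t\mapsto q^t$ together with the dimension bound above yields $q^{\sum_{i=1}^{n} s_i - m(n-k)} \leq |RGSS_W(\C)|$. The hypothesis $\sum_{i=1}^n s_i - m(n-k) > 0$ is what guarantees the exponent is positive, i.e. that the subcode is nontrivial and the bound is meaningful. Honestly there is no serious obstacle in this argument: the only points demanding a line of care are correctly converting each $\F_{q^m}$-dimension into an $\F_q$-dimension (the factor $m$) and noting that the potentially lossy step is bounding $\dim_{\F_q}(\C+W)$ by $mn$, so the inequality is as tight as the overlap of $\C$ and $W$ allows.
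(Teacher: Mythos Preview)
Your proof is correct and is in fact cleaner than the paper's own argument. Both proofs ultimately rest on the same count---an $\F_q$-subspace of dimension $\sum_i s_i$ meeting one of codimension $m(n-k)$ inside an $mn$-dimensional ambient space---but the paper carries this out in coordinates: it parametrises $c\in W$ by its $\sum_i s_i$ coefficients in a chosen basis of each $V_i$, writes the membership condition $c\Hm^T=0$ as $m(n-k)$ $\F_q$-linear equations, and then invokes ``unknowns minus equations''. Your use of the Grassmann identity $\dim(\C\cap W)=\dim\C+\dim W-\dim(\C+W)$ together with $\dim(\C+W)\le mn$ is the coordinate-free version of exactly that count.

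What you gain is twofold. First, you avoid introducing a parity-check matrix and expanding its entries over a basis of $\F_{q^m}$, so the argument is shorter and makes the single lossy step (bounding $\dim_{\F_q}(\C+W)$ by $mn$) transparent. Second, you sidestep the paper's preliminary reduction ``up to isomorphisms, $V_1\subset\cdots\subset V_n$'', which is only a notational convenience for writing all coordinates against a common basis $\beta_1,\dots,\beta_{s_n}$ and is not needed for the inequality itself. The paper's route, on the other hand, has the advantage that the explicit description of codewords via $\Um\Hm^T=0$ is reused later (Lemma~\ref{unique} and Theorem~\ref{bounds}) to obtain the upper bound for Gabidulin codes, so its extra machinery is not wasted in the long run.
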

\begin{proof}
    Without altering the generality, assume that $s_1\leq s_2 \leq ... \leq s_n$. So, up to isomorphisms, we have $V_1 \subset ... \subset V_n$, even if it means permuting the columns and obtaining isometrically equivalent $\F_q-$linear codes. So there exists a series of bases $b_1, ..., b_n$ for $V_1,...,V_n$ respectively, satisfying $b_1 \subset ... \subset b_n$ and $b_i=\{\beta_1, \beta_2,..., \beta_{s_i}\}$.
    Let   $c=(c_1,...,c_n) \in W$. Then, $$c_i=u_{1i}\beta_1+...+u_{s_ii}\beta_{s_i}+0\beta_{s_i+1}+...+0\beta_{s_n}$$
    which translates to
    \[
    c=b_n \Um =(\beta_1, \beta_2,..., \beta_{s_n})\Um \text{
    with } U=(u_{i,t})_{i=1,t=1}^{s_n,n} \in \F_{q}^{s_n\times n}.
    \]
    Let $\Hm=(h_{j,t})_{j=1,t=1}^{n-k,n} \in \F_{q^m}^{(n-k)\times n}$ a parity-check matrix of $\C$.\\  
    $$c \in RGSS_W(\C) \Longleftrightarrow \left\{\begin{array}{@{}l@{}}
        c=(\beta_1, \beta_2,..., \beta_{s_n})\Um \\
        (\beta_1, \beta_2,..., \beta_{s_n})\Um \Hm^T= 0
      \end{array}\right.\,.
    $$
    Let $(\gamma_1,...,\gamma_m)$ a $\F_q-$basis of $\F_{q^m}$.
    For all $i \in \{1, \cdots s_n\}$,  $j\in \{1, \cdots n-k\}$ and  $t \in \{1, \cdots n\}$, we have 
    \[
        \beta_ih_{j,t}= \sum\limits_{k=1}^m \delta_{i,t}^{(j,k)}\gamma_k \text{ where } \delta_{i,t}^{(j,k)} \in \F_q.
    \] 
    So,
       \[
       \forall j=1,...,n-k, \forall  k=1,...,m, \sum\limits_{i=1,t=1}^{s_n,n} \delta_{i,t}^{(j,k)}u_{i,t}= 0
       \]
    It is a linear system in $\sum\limits_{i=1}^{n}s_i$ unknowns and $m(n-k)$ equations. Therefore, the space of solution has dimension at least $\sum\limits_{i=1}^{n}s_i-m(n-k)$. Therefore, $$q^{\sum\limits_{i=1}^{n}s_i - m(n-k)} \leq |RGSS_W(\C)|.$$ \qed
\end{proof}

Notice that taking $s_1=...=s_n$ leads to the result obtained by E. Gabidulin and P. Loidreau in \cite{GabLoidreau2008} as a special case.
\begin{corollary}[\cite{GabLoidreau2008}, proposition 1]\\
Let $\mathcal{C}$ be a $[n, k, d=d_R(\C)]_{q^m}-$code and $V$ be a $\F_q-$subspace of $\F_{q^m}$ with dimension $s \leq m$.
If $ns - m(n-k) > 0$ then:
$$q^{ns - m(n-k)} \leq |\C \cap V^n|$$
\end{corollary}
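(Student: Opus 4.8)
The plan is to obtain this corollary as an immediate specialization of Proposition \ref{prop1}, since the only difference between the two statements is that Gabidulin and Loidreau work with a single fixed subspace rather than with a family $V_1,\dots,V_n$. First I would set $V_1 = V_2 = \cdots = V_n = V$, so that each dimension $s_i$ equals $s$ and the Cartesian product $W = \prod_{i=1}^{n} V_i$ collapses to $V^n$. Under this choice the Rank Generalized Subspace Subcode $RGSS_W(\C) = \C \cap W$ of Definition \ref{defgss} becomes exactly the Subspace Subcode $\C \cap V^n$ of Definition \ref{defss}, so the two notions coincide and the substitution is well posed.

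It then remains only to rewrite the hypothesis and the conclusion of Proposition \ref{prop1} under this substitution. The sum $\sum_{i=1}^{n} s_i$ becomes $ns$, so the condition $\sum_{i=1}^{n} s_i - m(n-k) > 0$ reads precisely $ns - m(n-k) > 0$, which is the hypothesis of the corollary. Likewise the lower bound $q^{\sum_{i=1}^{n} s_i - m(n-k)} \leq |RGSS_W(\C)|$ reads $q^{ns - m(n-k)} \leq |\C \cap V^n|$, which is the desired inequality.

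Since Proposition \ref{prop1} has already been established in full generality — including the reduction to a homogeneous linear system in $\sum_{i=1}^{n} s_i$ unknowns over $\F_q$ subject to $m(n-k)$ constraints, and the resulting dimension count for its solution space — no further argument is required. I do not expect any genuine obstacle here: the entire substantive content lies in the more general proposition, and this corollary is purely a matter of reading off the equal-dimension case $s_1 = \cdots = s_n = s$. The only point worth stating explicitly for the reader is the identification $W = V^n$, which confirms that the specialized bound is indeed a bound on the cardinality of the classical Subspace Subcode $\C \cap V^n$.
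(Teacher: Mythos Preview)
Your proposal is correct and matches the paper's approach exactly: the paper simply remarks that taking $s_1=\cdots=s_n=s$ in Proposition~\ref{prop1} recovers this result as a special case, without any further argument.
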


\section{Generalized Subspace Subcodes of Gabidulin codes} \label{sec:gab}
\subsection{Gabidulin codes}
Gabidulin codes \cite{gabidulin1985theory} are defined by linear polynomials that were studied for the first time by O. Ore in \cite{Ore1933}.
These polynomials are of the form $P(z)=\sum_i p_iz^{q^i}$, where $p_i \in \F_{q^m}$ and $p_i \neq 0$ for finitely many i. By convention, we write $[i]:= q^i$, so $P(z)$ becomes $P(z)=\sum_i p_iz^{[i]}$. If $P \neq 0$, its $q-$degree is $deg_q(P)=max(\{i : p_i \neq 0\})$. Considering addition and multiplication by a scalar, we notice that  $q-$polynomials are $\F_q-$linear maps from $\F_{q^m}$ to $\F_{q^m}$.
Endowing $q-$polynomials with the addition and composition of polynomials allows getting a non-commutative ring structure that we denote by $\mathcal{P}_{q^m}$.
\begin{definition}
Let $\gv =(g_{1},g_{2},...,g_{n}) \in \F_{q^m}^n$ so that the elements $g_1, \cdots, g_n$ are $\F_q-$linearly independent, and $k < n$. The $[n,k]-$Gabidulin code with support $\gv$, dimension $k$ and length $n$ is defined by
$$Gab_k(\gv)=\{ (P(g_1),..., P(g_n) ) \in\F_{q^m} : P \in \mathcal{P}_{q^m},~ deg_q(P)<k \}$$
\end{definition}

A generator matrix of the Gabidulin code $Gab_k(\gv)$ is given by
$$G=\begin{pmatrix} g_{1}^{[0]}& \cdots & g_{n}^{[0]}\\ 
                g_{1}^{[1]}& \cdots & g_{n}^{[1]}\\
		 \vdots & \cdots & \vdots\\ 
		g_{1}^{[k-1]}& \cdots &g_{n}^{[k-1]}\\ 
	\end{pmatrix}$$
Gabidulin codes also satisfy the Singleton bound, that is to say, the minimum rank distance of $Gab_k(\gv)$ is $d_R = n - k + 1$. Thus, it can correct up to $\tau_{max}=\lfloor\frac{n-k}{2}\rfloor$ errors.
One can remark that the above definition and notations
are very close to those of Reed-Solomon codes with the set of distinct elements replaced by a set of linearly independent elements, and the classical power $g_{j}^{i}$ replaced by the ``Frobenius power'' $g_{j}^{[i]}$.
It is well known that a parity-check matrix of the $[n,k,d]-$Gabidulin code $Gab_k(\gv)$ is of the form:
 $$\Hm=\begin{pmatrix} 
     h_{1}^{[0]}&\cdots& h_{n}^{[0]}\\ 
     h_{1}^{[1]}&\cdots& h_{n}^{[1]}\\
	 \vdots & \cdots & \vdots\\ 
	h_{1}^{[d-2]}&\cdots&h_{n}^{[d-2]}\\ 
	\end{pmatrix}$$
where $\hv = \left(h_1, \cdots h_n\right)$ is a rank $n$ codeword of $Gab_k(\gv)^{ \perp }.$

\subsection{Rank Generalized Subspace Subcodes of Gabidulin codes}
Let us now apply the results of the previous section to the specific case of Gabidulin code. We start by the following proposition, providing upper and lower bounds for the cardinal of Generalized Subspace Subcodes of Gabidulin codes.

\begin{lemma}\label{unique}
    Let $Gab_k(\gv)$ be a $[n, k, d]-$Gabidulin code of length $n$ over $\mathbb{F}_{q^m}$. Let $V_1,...,V_n$ be a set of $n$ $\F_q-$subspaces of $\F_{q^m}$ with dimensions respectively $s_1, \cdots, s_n \leq m$. Set  
$W={\displaystyle \prod_{i=1}^{n} V_i}$, constituted of n-tuples with the $i-$th coordinate in $V_i$. 
Then, each codeword of $RGSS_W(Gab_k(\gv))$ maps to a unique codeword of $\mathcal{B}_W$, the Gabidulin code having parity-check matrix $T=(\beta_{j}^{[m-i+1]})_{i=1,j=1}^{d-1,s}$ with $s=\max\limits_{1\leq i \leq n}(s_i)$ and $(\beta_1, \beta_2,..., \beta_{s}) \in \F_{q^m}^s$ linearly
independent.
\end{lemma}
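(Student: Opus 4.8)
The plan is to realize the claimed correspondence as an explicit $\F_q$-linear map and to reduce the membership statement to a single Frobenius identity. Following the set-up of the proof of Proposition~\ref{prop1}, I may assume $s_1 \le \cdots \le s_n$ with $V_1 \subseteq \cdots \subseteq V_n$, and fix a nested family of bases giving $\F_q$-linearly independent elements $\beta_1, \dots, \beta_s \in \F_{q^m}$ (with $s = \max_i s_i = s_n$) such that every $c \in W$ is written uniquely as $c = (\beta_1, \dots, \beta_s)\,\Um$ for a matrix $\Um = (u_{j,t}) \in \F_q^{s \times n}$ whose $t$-th column is supported on $\{1, \dots, s_t\}$. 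Uniqueness of $\Um$ is immediate from the linear independence of the $\beta_j$, and this already pins down the candidate image uniquely.

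Next I would define the map $c \mapsto \yv$, where, using the parity-check vector $\hv = (h_1, \dots, h_n)$ of $Gab_k(\gv)$, I set $y_j = \sum_{t=1}^n u_{j,t} h_t$ for $1 \le j \le s$, i.e. $\yv^T = \Um \hv^T$. The core step is to translate the relevant parity-check condition into a system in $\yv$. Since the entries $u_{j,t}$ lie in $\F_q$, they are fixed by the Frobenius $x \mapsto x^{[i-1]}$, so $\sum_t u_{j,t} h_t^{[i-1]} = \bigl(\sum_t u_{j,t} h_t\bigr)^{[i-1]} = y_j^{[i-1]}$. Expanding $c\,\Hm^T = 0$ row by row with $c_t = \sum_j \beta_j u_{j,t}$ then yields, for every $i \in \{1, \dots, d-1\}$,
\begin{equation*}
\sum_{j=1}^{s} \beta_j\, y_j^{[i-1]} = 0 .
\end{equation*}
Thus membership of $c$ in $RGSS_W(Gab_k(\gv)) = Gab_k(\gv) \cap W$ is equivalent to this system in $\yv$.

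It then remains to identify this system with the parity-check equations $T\yv^T = 0$ of $\mathcal{B}_W$. Applying the Frobenius $x \mapsto x^{[i-1]}$ to the $i$-th equation $\sum_j \beta_j^{[m-i+1]} y_j = 0$ and using $\left(\beta_j^{[m-i+1]}\right)^{[i-1]} = \beta_j^{[m]} = \beta_j$ (as $\beta_j \in \F_{q^m}$) turns it into $\sum_j \beta_j y_j^{[i-1]} = 0$; since the Frobenius is an automorphism, the two systems are equivalent. Hence $\yv \in \mathcal{B}_W$, and $c$ has a well-defined, unique image in $\mathcal{B}_W$.

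Finally, to see that the correspondence is injective, so that it yields the desired comparison of cardinalities, suppose $c, c'$ give the same $\yv$. Then $(\Um - \Um')\hv^T = 0$, i.e. $\sum_t (u_{j,t} - u'_{j,t}) h_t = 0$ for each $j$; because $h_1, \dots, h_n$ are $\F_q$-linearly independent (as $\hv$ has rank $n$) and the coefficients lie in $\F_q$, this forces $\Um = \Um'$ and thus $c = c'$. I expect the main obstacle to be the middle step, namely getting the bookkeeping of the Frobenius exponents right so that the parity-check matrix of $Gab_k(\gv)$ (ascending powers $h_t^{[i-1]}$) and that of $\mathcal{B}_W$ (descending powers $\beta_j^{[m-i+1]}$) produce literally the same equations in $\yv$; everything else is routine linear algebra over $\F_q$.
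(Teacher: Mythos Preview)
Your argument is correct and follows essentially the same route as the paper: you adopt the nested-basis set-up of Proposition~\ref{prop1}, define the image via $\yv^T=\Um\hv^T$ (the paper writes $w_i=(\Um\Hm^T)_{i,1}$, which is the same thing), exploit that the $u_{j,t}$ are Frobenius-fixed to obtain $\sum_j \beta_j y_j^{[i-1]}=0$, and then use the Frobenius automorphism to match this with the parity-check equations of $\mathcal{B}_W$. The only addition on your side is the explicit injectivity argument from the $\F_q$-linear independence of $h_1,\dots,h_n$; the paper does not spell this out in the lemma itself (it reappears later when discussing the parent code), so your write-up is in fact slightly more complete.
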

\begin{proof}
According to Proposition \ref{prop1}, we have 
$$c \in RGSS_W(\C) \Longleftrightarrow \left\{\begin{array}{@{}l@{}}
    c=(\beta_1, \beta_2,\cdots, \beta_{s_n})\Um \\
    (\beta_1, \beta_2,\cdots, \beta_{s_n})\Um \Hm^T= 0
  \end{array}\right.$$
where $\Hm=\begin{pmatrix} 
     h_{1}^{[0]}&\cdots& h_{n}^{[0]}\\ 
     h_{1}^{[1]}&\cdots& h_{n}^{[1]}\\
	\vdots & \cdots & \vdots\\ 
	h_{1}^{[d-2]}&\cdots&h_{n}^{[d-2]}\\ 
	\end{pmatrix}$.\\\\
	By setting $\Lm=\Um \Hm^T$, we have 
	$$\Lm_{i,j}=\sum\limits_{l=1}^{n}u_{i,l}h_l^{[j-1]}=\left( \sum\limits_{l=1}^{n}u_{i,l}h_l \right)^{[j-1]}=\Lm_{i,1}^{[j-1]}.$$
	Since $0=b_n\Um\Hm^T$, for $w_i=\Lm_{i,1}$ and $1\leq j \leq d-1$ we have \\ 
 \[
 \begin{tabular}{ccl}
     $0$ & $=$ & $(\beta_1, \beta_2,\cdots, \beta_{s_n}).\begin{pmatrix} 
     w_{1}^{[j-1]}\\ 
     w_{2}^{[j-1]}\\
	\vdots\\ 
	w_{s_n}^{[j-1]}\\ 
	\end{pmatrix}$ \\
      & $=$ & $\beta_1w_{1}^{[j-1]}+ \beta_2w_{2}^{[j-1]}+\cdots+\beta_{s_n}w_{s_n}^{[j-1]}$\\
      &  $=$ & $(\beta_1^{[m-j+1]}w_{1}+ \beta_2^{[m-j+1]}w_{2}+\cdots+\beta_{s_n}^{[m-j+1]}w_{s_n})^{[j-1]}$ \\
      &   $=$  & $\beta_1^{[m-j+1]}w_{1}+ \beta_2^{[m-j+1]}w_{2}+\cdots+\beta_{s_n}^{[m-j+1]}w_{s_n}$ \\
      &  $=$   & $(w_1,\cdots,w_{s_n}) \begin{pmatrix} 
     \beta_{1}^{[m-j+1]}\\ 
     \beta_{2}^{[m-j+1]}\\
	\vdots\\ 
	\beta_{s_n}^{[m-j+1]}\\ 
	\end{pmatrix}$
 \end{tabular} 
 \]
Which leads to  
$$(w_1,...,w_{s_n}) \begin{pmatrix} 
     \beta_{1}^{[m]}  & \beta_{1}^{[m-1]} & \cdots & \beta_{1}^{[m-d+2]}\\ 
     \beta_{2}^{[m]}  & \beta_{2}^{[m-1]} & \cdots & \beta_{2}^{[m-d+2]}\\
	\vdots            & \vdots            & \cdots & \vdots\\
	\beta_{s_n}^{[m]} & \beta_{s_n}^{[m-1]} & \cdots & \beta_{s_n}^{[m-d+2]}\\ 
	\end{pmatrix}=0$$
Let $\mathcal{B}_W$ be the Gabidulin code having parity-check matrix ~$\Tm=(\beta_{j}^{[m-i+1]})_{i=1,j=1}^{d-1,s_n}.$ Then, 
$c=b_n\Um \in RGSS_W(Gab_k(\gv)) \Longrightarrow (w_1,\cdots,w_{s_n}) \in \mathcal{B}_W$.
\end{proof}

\begin{theorem}\label{bounds}
Let $Gab_k(\gv)$ be a $[n, k, d]-$Gabidulin code of length $n$ over $\mathbb{F}_{q^m}$. Let $V_1,...,V_n$ be a set of $n$ $\F_q-$subspaces of $\F_{q^m}$ with dimensions respectively $s_1, \cdots s_n \leq m$. Set  
$W={\displaystyle \prod_{i=1}^{n} V_i}$, constituted of n-tuples with the $i-$th coordinate in $V_i$.
If $\sum\limits_{i=1}^{n}s_i - m(n-k) > 0$ then:
$$q^{\sum\limits_{i=1}^{n}s_i - m(n-k)} \leq |RGSS_W(Gab_k(\gv))| \leq q^{m(\max\limits_{1\leq i \leq n}(s_i)-d+1)}$$
\end{theorem}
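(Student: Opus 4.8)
The statement has two halves that call for quite different tools. The lower bound is immediate: a Gabidulin code $Gab_k(\gv)$ is in particular a $[n,k]_{q^m}$-linear code, so Proposition \ref{prop1} applies verbatim and yields $q^{\sum_i s_i - m(n-k)} \leq |RGSS_W(Gab_k(\gv))|$ under the stated hypothesis. Nothing Gabidulin-specific is needed here, so essentially all the work goes into the upper bound.

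For the upper bound the plan is to exploit Lemma \ref{unique}, which already produces a map $\Phi\colon RGSS_W(Gab_k(\gv)) \to \mathcal{B}_W$ sending $c = (\beta_1,\ldots,\beta_s)\Um$ to $w = (w_1,\ldots,w_s)$ with $w_i = \sum_{l=1}^n u_{i,l}h_l$. First I would pin down the cardinality of the target: $\mathcal{B}_W$ is a Gabidulin code of length $s = \max_{1\leq i\leq n} s_i$ whose parity-check matrix $T = (\beta_j^{[m-i+1]})_{i=1,j=1}^{d-1,s}$ has $d-1$ rows, and since $\beta_1,\ldots,\beta_s$ are $\F_q$-linearly independent this Moore-type matrix has full rank $d-1$; hence $\dim_{\F_{q^m}}\mathcal{B}_W = s-(d-1)$ and $|\mathcal{B}_W| = q^{m(s-d+1)}$. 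It then suffices to prove that $\Phi$ is injective, for then $|RGSS_W(Gab_k(\gv))| = |\Phi(RGSS_W(Gab_k(\gv)))| \leq |\mathcal{B}_W| = q^{m(\max_i s_i - d + 1)}$, which is exactly the claimed bound.

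I would establish injectivity of $\Phi$ by factoring it through the correspondence $c \leftrightarrow \Um$ followed by the linear map $\Um \mapsto \Um\hv^T = w$, where $\hv = (h_1,\ldots,h_n)$. The first passage is a bijection: because $\beta_1,\ldots,\beta_s$ are $\F_q$-linearly independent, the coordinates $u_{i,j}$ of each $c_j \in \langle\beta_1,\ldots,\beta_s\rangle$ are uniquely determined (those with $i>s_j$ being forced to $0$), so $c$ and $\Um$ determine each other. The second passage is where the Gabidulin structure enters decisively: $\hv$ is chosen to be a rank-$n$ codeword of $Gab_k(\gv)^{\perp}$, so $h_1,\ldots,h_n$ are $\F_q$-linearly independent, whence the $\F_q$-linear map $\F_q^n \to \F_{q^m}$, $v \mapsto \sum_l v_l h_l$, has trivial kernel. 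Applying it row-by-row shows that $\Um\hv^T = \Um'\hv^T$ forces $\Um = \Um'$, and chaining the two passages gives injectivity of $\Phi$.

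The main obstacle is precisely this injectivity argument. It is tempting to believe the passage $\Um \mapsto w$ collapses the $sn$ scalars of $\Um$ into the $s$ field elements of $w$ and therefore destroys information, and indeed for an arbitrary linear code it would. What rescues the bound is the maximal-rank property of $\hv$, which forces $h_1,\ldots,h_n$ to be $\F_q$-independent and thereby makes $v \mapsto \sum_l v_l h_l$ injective on $\F_q^n$; this silently uses $n \leq m$, a constraint that always holds for Gabidulin codes since their support must itself be $\F_q$-linearly independent. I would also take care to record the full-rank claim for the Moore matrix $T$ (which needs $d-1 \leq s$) so that the exponent $s-d+1$ is non-negative wherever the upper bound is asserted.
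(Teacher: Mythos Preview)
Your proof is correct and follows essentially the same route as the paper: Proposition~\ref{prop1} for the lower bound, and Lemma~\ref{unique} together with the computation $|\mathcal{B}_W| = q^{m(s-d+1)}$ for the upper bound. You are in fact more careful than the paper, which simply invokes the word ``unique'' in Lemma~\ref{unique} to pass to $|RGSS_W(Gab_k(\gv))|\leq |\mathcal{B}_W|$ without spelling out the injectivity argument you supply via the $\F_q$-independence of $h_1,\ldots,h_n$.
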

\begin{proof}~\\
According to lemma \ref{unique}, 
$c=b_n\Um \in RGSS_W(Gab_k(\gv)) \Longrightarrow (w_1,\cdots,w_{s_n}) \in \mathcal{B}_W \Longrightarrow |RGSS_W(Gab_k(\gv))|\leq  |\mathcal{B}_W| = q^{m(s_n-d+1)}=q^{m(s_n-d+1)}=q^{m(\max\limits_i(s_i)-d+1)}$.\\
According to proposition \ref{prop1}
$$q^{\sum\limits_{i=1}^{n}s_i - m(n-k)} \leq |RGSS_W(Gab_k(\gv))|\leq q^{m(\max\limits_{1\leq i \leq n}(s_i)-d+1)}$$ \qed
\end{proof}
One can remark that taking $s_1 = \cdots = s_n = s$ gives the following corollary, which is also a known result from \cite{GabLoidreau2008}.
\begin{corollary}[\cite{GabLoidreau2008}]
Let $Gab_k(\gv)$ be a $[n, k, d]-$Gabidulin code of length $n$ over $\mathbb{F}_{q^m}$ and $V$ be a $\F_q-$subspaces of $\F_{q^m}$ with dimension $s \leq m$.
If $ns - m(n-k) > 0$ then,
$$q^{ns - m(n-k)} \leq |Gab_k(\gv)_{|V}| \leq q^{m(s-d+1)}$$
\end{corollary}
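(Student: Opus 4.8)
The plan is to obtain this corollary as the special case of Theorem~\ref{bounds} in which all of the subspaces coincide. First I would set $V_1 = V_2 = \cdots = V_n = V$, so that each dimension $s_i$ equals the common value $s$. With this choice the Cartesian product of Definition~\ref{defgss} collapses to $W = \prod_{i=1}^{n} V_i = V^n$, and hence $RGSS_W(Gab_k(\gv)) = Gab_k(\gv) \cap V^n = (Gab_k(\gv)_{|V})$, which is exactly the Subspace Subcode of Definition~\ref{defss}. So the object whose cardinality is bounded in the corollary is literally the object $RGSS_W$ of Theorem~\ref{bounds} with equal subspaces.

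Next I would substitute $s_i = s$ into the two exponents appearing in Theorem~\ref{bounds}. The sum in the lower bound becomes $\sum_{i=1}^{n} s_i = ns$, so the hypothesis $\sum_{i=1}^{n} s_i - m(n-k) > 0$ of Theorem~\ref{bounds} is precisely the hypothesis $ns - m(n-k) > 0$ of the corollary, and the lower-bound exponent becomes $ns - m(n-k)$. For the upper bound, the maximum $\max_{1 \leq i \leq n}(s_i)$ is simply $s$, so the exponent $m(\max_{1 \leq i \leq n}(s_i) - d + 1)$ reduces to $m(s - d + 1)$.

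Assembling these substitutions, Theorem~\ref{bounds} yields directly
$$q^{ns - m(n-k)} \leq |Gab_k(\gv)_{|V}| \leq q^{m(s-d+1)},$$
which is the claimed double inequality. There is no genuine obstacle here: the entire argument is a transparent specialization of the already-established general bounds, and the only point requiring attention is the elementary bookkeeping that identifies $RGSS_{V^n}(Gab_k(\gv))$ with $(Gab_k(\gv)_{|V})$ and confirms that both the constraint $\sum_i s_i - m(n-k) > 0$ and the two exponents specialize correctly when every $s_i$ is set equal to $s$.
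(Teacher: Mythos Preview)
Your proposal is correct and matches the paper's approach exactly: the paper simply remarks that taking $s_1 = \cdots = s_n = s$ in Theorem~\ref{bounds} yields the corollary, and your write-up carries out precisely this specialization with the appropriate identification $RGSS_{V^n}(Gab_k(\gv)) = (Gab_k(\gv)_{|V})$.
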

The following result is a direct consequence of Proposition \ref{ExtGenMat}, and will be useful to compute a generator matrix of a Generalized Subspace Subcode of a Gabidulin code.
\begin{proposition}
The $q-$ary image $Im_{q,(B_i)_i}(Gab_k(\gv))$ of a Gabidulin code $Gab_k(\gv)$ from the basis $B$ with respect to the $n$ bases $(B_1,..., B_n)$ of $\F_{q^m}$,  is generated by the block matrix $((\Mm_{g_j}^{q^{i-1}}))_{1 \leq i\leq k, 1 \leq j\leq n} \in (\F_{q}^{m\times m})^{k\times n}$ where the matrix $\Mm_{g_{j}}^T=\mathcal{M}_{B,B_j}(\theta_{g_{j}})$ is the matrix of the linear map $\theta_{g_{j}} = Exp_{B_j} \circ f_{g_{j}} \circ Exp^{-1}_{B}$ with $f_{g_{j}}(x)=g_{j}x$ for any $x \in \F_{q^m}$. 
\end{proposition}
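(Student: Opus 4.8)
The plan is to obtain the statement as a direct specialization of Proposition \ref{ExtGenMat} to the explicit generator matrix of a Gabidulin code, the only extra ingredient being an algebraic identity that rewrites the multiplication matrix of a Frobenius power $g_j^{[i-1]}$ in terms of that of $g_j$. First I would recall that a generator matrix of $Gab_k(\gv)$ is $\Gm=(g_j^{[i-1]})_{1\le i\le k,\,1\le j\le n}$, so that its $(i,j)$ entry is $g_{ij}=g_j^{[i-1]}=g_j^{q^{i-1}}$. Feeding this $\Gm$ into Proposition \ref{ExtGenMat} immediately yields that $Im_{q,(B_i)_i}(Gab_k(\gv))$ is generated by the block matrix $(\Mm_{g_{ij}})_{1\le i\le k,\,1\le j\le n}$, where each block is the $\F_q$-matrix of the map $\theta_{g_{ij}}=Exp_{B_j}\circ f_{g_{ij}}\circ Exp_B^{-1}$ attached to the scalar $g_{ij}=g_j^{q^{i-1}}$. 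It therefore only remains to identify $\Mm_{g_j^{q^{i-1}}}$ with $\Mm_{g_j}^{q^{i-1}}$.

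For that identification I would use that, for a fixed evaluation basis, the assignment $\alpha\mapsto\Mm_\alpha$ sending a scalar of $\F_{q^m}$ to the coordinate matrix of multiplication-by-$\alpha$ is an $\F_q$-algebra homomorphism: it is additive and sends a product $\alpha\beta$ to the matrix product $\Mm_\alpha\Mm_\beta$, because $f_{\alpha\beta}=f_\alpha\circ f_\beta$. Since $g_j^{q^{i-1}}$ is literally the integer power of $g_j$ with exponent $q^{i-1}$ (the bracket $[i-1]$ being that exponent), multiplicativity gives $\Mm_{g_j^{q^{i-1}}}=(\Mm_{g_j})^{q^{i-1}}$, the superscript now denoting the $q^{i-1}$-th power for the matrix product. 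Note that this must be the matrix power and not an entrywise one, since the entries lie in $\F_q$ and are fixed by Frobenius. Substituting block by block turns the generator matrix furnished by Proposition \ref{ExtGenMat} into $((\Mm_{g_j}^{q^{i-1}}))_{1\le i\le k,\,1\le j\le n}$, which is the claimed form.

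The point needing care — and which I expect to be the main obstacle — is that the clean multiplicativity $\Mm_{\alpha\beta}=\Mm_\alpha\Mm_\beta$ is literally valid only when the source and target bases of $\theta_{g_j}$ coincide, whereas here $\theta_{g_j}$ carries the source basis $B$ to the target basis $B_j$, so that composing $\theta_{g_j}$ with itself secretly inserts the change-of-basis $Exp_B^{-1}\circ Exp_{B_j}$. I would handle this through the conjugation identity $f_{g_j^{q}}=\sigma\circ f_{g_j}\circ\sigma^{-1}$, where $\sigma\colon x\mapsto x^q$ is the Frobenius; iterating gives $f_{g_j^{q^{i-1}}}=\sigma^{\,i-1}\circ f_{g_j}\circ\sigma^{-(i-1)}$, and passing to coordinates shows that $\Mm_{g_j^{q^{i-1}}}$ is obtained from $\Mm_{g_j}$ by the corresponding Frobenius conjugation, which collapses to the plain matrix power $\Mm_{g_j}^{q^{i-1}}$ exactly under the single-basis convention $B_1=\cdots=B_n=B$ recorded in the remark following Proposition \ref{ExtGenMat}. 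Making this basis bookkeeping explicit, so that the exponent $q^{i-1}$ is unambiguously the matrix power, is the only delicate step; everything else is a direct substitution into Proposition \ref{ExtGenMat}.
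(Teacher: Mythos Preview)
Your core approach matches the paper exactly: the paper records this proposition as a ``direct consequence of Proposition~\ref{ExtGenMat}'' and gives no separate argument, so substituting $g_{ij}=g_j^{q^{i-1}}$ into the block decomposition $(\Mm_{g_{ij}})_{1\le i\le k,\,1\le j\le n}$ displayed immediately after Proposition~\ref{ExtGenMat} is all that is intended.

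Where you go beyond the paper is in reading $\Mm_{g_j}^{q^{i-1}}$ as the $q^{i-1}$-fold matrix power of $\Mm_{g_j}$ and trying to establish the identity $\Mm_{g_j^{q^{i-1}}}=(\Mm_{g_j})^{q^{i-1}}$. Your own third paragraph is the correct diagnosis and is in fact decisive against that reading in the generality stated: since $\theta_{g_j}=Exp_{B_j}\circ f_{g_j}\circ Exp_B^{-1}$ carries the source basis $B$ to the target basis $B_j$, the iterated matrix product corresponds to $(Exp_{B_j}\circ f_{g_j}\circ Exp_B^{-1})^{q^{i-1}}$, which inserts the spurious change-of-basis $Exp_B^{-1}\circ Exp_{B_j}$ between consecutive copies of $f_{g_j}$ and does not equal $Exp_{B_j}\circ f_{g_j^{q^{i-1}}}\circ Exp_B^{-1}$ unless $B_j=B$. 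The Frobenius-conjugation idea does not rescue this; $f_{g_j^{q}}=\sigma f_{g_j}\sigma^{-1}$ only trades one foreign intertwiner for another and yields a conjugate of $\Mm_{g_j}$, not a power, in the mixed-basis setting. So under the literal matrix-power interpretation the claim is only valid in the single-basis case $B_1=\cdots=B_n=B$ (where your multiplicativity argument from the second paragraph is already the cleanest proof). The paper's phrasing ``direct consequence'', together with the fact that only $\Mm_{g_j}$ is defined in the statement, strongly suggests that $\Mm_{g_j}^{q^{i-1}}$ is intended as shorthand for $\Mm_{g_j^{q^{i-1}}}$, the block attached to the element $g_j^{q^{i-1}}$; under that reading your first paragraph is a complete proof and the rest is unnecessary.
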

We then have the following algorithm \ref{alg:cap} to compute a generator matrix of the shortened of a $q-$ary image of a generalized subcode on the subspaces of a Gabidulin code. 
\begin{algorithm}[H]
\caption{Generator matrix of $RGSS_W(Gab_k(\gv))$}\label{alg:cap}
\textbf{Input} A generator matrix $G \in \F_{q^m}^{k\times n}$ of $Gab_k(\gv)$, a set of $n$ bases $D_1,...,D_n$ of subspaces from $\F_{q^m}$ with dimensions respectively $s_1, \cdots, s_n \leq m$. \\

\textbf{Output} $G_U$ generator matrix of $Im_{q,(B_j)_j}^B(RGSS_W(Gab_k(\gv)))$
\begin{algorithmic}
\State $U=\bigcup\limits_{j=1}^n u_j$ such that for $1 \leq j\leq n$, $u_j=\{(j-1)m+1,..., (j-1)m+s_j\}$.
\State Complete the families $D_1,...,D_n$ into $n$ basis $B_1,...,B_n$ of $\F_{q^m}$
\State Set $\Qm_j$ be the change-of-basis matrix from the basis $B$ to the basis $B_j$, 
\State Compute a generator matrix $\hat{\Gm}_B$ of $Im_{q,B}(Gab_k(\gv))$
\State Compute a parity check matrix $\hat{\Hm}_B$ from $\hat{\Gm}_B$
\State Compute $\hat{\Hm}^B_{(B_j)_j}=\hat{\Hm}_B\cdot Diag((\Qm_1^{-1})^T,...,(\Qm_n^{-1})^T)$
\State Compute $\Mm_U=Punct_{I_U}(\hat{\Hm}^B_{(B_j)_j})$ as a parity-check matrix of 
 $S_U ( Gab_k(\gv))$ 
\State Compute $G_U$, generator matrix of $Im_{q,(D_j)_j}^B(RGSS_W(Gab_k(\gv)))$ from $\Mm_U$.
\end{algorithmic}
\end{algorithm}

\section{Parent Codes of Gabidulin RGSS codes}\label{sec:parent}
For $s_n=\max\limits_{1\leq i \leq n}(s_i) \geq d$, according to section 4, each codeword $$c=(\beta_1, \beta_2,\cdots, \beta_{s_n})\Um \in RGSS_W(Gab_k(\gv))$$ maps to a unique codeword $(w_1,\cdots,w_{s_n}) \in \mathcal{B}_W$ the Gabidulin code having parity-check matrix $T=(\beta_{j}^{[m-i+1]})_{i=1,j=1}^{d-1,s_n}$ which is MRD code with parameters $[s_n, s_n - d + 1, d]$.\\
For decoding needs, we may use the code $\mathcal{B}_W$, linked to $RGSS_W(Gab_k(\gv))$. 
\begin{definition}(Parent Code)\\
The $\F_{q^m}-$linear code $\mathcal{B}_W$ defined above, with parity-check matrix $$T=(\beta_{j}^{[m-i+1]})_{i=1,j=1}^{d-1,s_n}$$ is called  the parent code of $ RGSS_W(Gab_k(\gv))$, and is denoted by\\ $P_{RGSS_W(Gab_k(\gv))}$.
\end{definition}
\begin{proposition}
Let  $Gab_k(\gv)$ be a $[n, k, d]_{q^m}-$Gabidulin code having a parity check matrix admitting $\hv =(h_{1},h_{2},...,h_{n})$ as its first row. Let $V_1,...,V_n$ be a set of $n$ $\F_q-$subspaces of $\F_{q^m}$ with dimensions respectively $s_1, \cdots, s_n \leq m$ and $W = \prod_{i = 1}^{n} V_i $. For  $1 \leq i \leq n$, let $\bv_i = \{ \beta_1, \beta_2,..., \beta_{s_i} \}$ be a basis of $V_{i}$ such that the bases form an inclusion chain and $\bv$ the maximal basis up to isomorphism. Then the map
\[
\begin{tabular}{lccc}
  $f_\bv$ :    &   $W$  &$\longrightarrow $ & $(\F_{q^m})^{\max\limits_i(s_i)}$ \\
           &  $c = \bv \Um$   &  $\longmapsto$  & $f_\bv(c)= h\Um^T$
\end{tabular}
\]
is a rank-preserving injective $\F_q-$linear map and we have  
\[
f_\bv(RGSS_W(Gab_k(\gv)) \subset P_{RGSS_W(Gab_k(\gv))}
\]
Where $\Um=(u_{i,t})_{i=1,t=1}^{\max\limits_i(s_i),n} \in \F_{q}^{\max\limits_i(s_i)\times n}$ such that $$c_i=u_{1i}\beta_1+...+u_{s_ii}\beta_{s_i}+0\beta_{s_i+1}+...+0\beta_{\max\limits_i(s_i)}$$
\end{proposition}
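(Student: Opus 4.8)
The plan is to verify the three asserted properties in turn, all of which I would route through a single observation: both the rank of $c$ and the rank of its image can be read off from $\mathrm{rank}_{\F_q}(\Um)$.

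First I would settle well-definedness and $\F_q$-linearity. Because $\bv=(\beta_1,\dots,\beta_{\max_i(s_i)})$ is a basis of the largest subspace in the chain, its entries are $\F_q$-linearly independent, so the expansion $c=\bv\Um$ subject to the support constraint on the columns of $\Um$ (the $i$-th column vanishes beyond row $s_i$, which is exactly what forces $c_i\in V_i$) determines $\Um$ uniquely from $c$. The assignment $c\mapsto\Um$ is then $\F_q$-linear, being nothing more than coordinate extraction in a fixed basis, and $\Um\mapsto\hv\Um^T$ is patently $\F_q$-linear; hence $f_\bv$ is a well-defined $\F_q$-linear map. I would also record that $f_\bv(c)=\hv\Um^T=(w_1,\dots,w_{\max_i(s_i)})$ with $w_i=\sum_{l}u_{i,l}h_l$, which is precisely the vector appearing in Lemma \ref{unique}.

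The core of the argument is rank preservation, and this is the step I expect to be the main obstacle. On one side, the components $c_1,\dots,c_n$ are $\F_q$-combinations of the independent family $\{\beta_j\}$ whose coefficient vectors are the columns of $\Um$; since this family is independent, $w_R(c)=\dim_{\F_q}\langle c_1,\dots,c_n\rangle$ equals the column rank of $\Um$. On the other side, because $\hv$ is a rank-$n$ codeword of $Gab_k(\gv)^{\perp}$, the entries $h_1,\dots,h_n$ are themselves $\F_q$-linearly independent; as $w_i=\sum_{l}u_{i,l}h_l$, the $w_i$ are $\F_q$-combinations of this independent family with coefficient vectors given by the rows of $\Um$, so $w_R(f_\bv(c))$ equals the row rank of $\Um$. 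Row rank and column rank agree, which yields $w_R(f_\bv(c))=\mathrm{rank}_{\F_q}(\Um)=w_R(c)$. The delicate point here is genuinely the full-rank hypothesis on $\hv$: without the $\F_q$-independence of the $h_l$, the map would neither preserve rank nor be injective.

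Injectivity then follows at once, since $f_\bv(c)=0$ gives $\mathrm{rank}_{\F_q}(\Um)=w_R(f_\bv(c))=0$, hence $\Um=0$ and $c=\bv\Um=0$ (equivalently, independence of the $h_l$ forces $\Um=0$ directly from $\hv\Um^T=0$). Finally, the inclusion $f_\bv(RGSS_W(Gab_k(\gv)))\subset P_{RGSS_W(Gab_k(\gv))}$ is exactly the conclusion of Lemma \ref{unique}: for every $c=\bv\Um\in RGSS_W(Gab_k(\gv))$ the image vector $(w_1,\dots,w_{\max_i(s_i)})$ was shown there to lie in the parent code $\mathcal{B}_W=P_{RGSS_W(Gab_k(\gv))}$, which completes the proof.
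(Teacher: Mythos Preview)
Your argument is correct and follows the same line as the paper's own proof: both hinge on the $\F_q$-linear independence of the $\beta_j$'s and of the $h_l$'s to obtain the chain $w_R(\bv\Um)=\mathrm{rank}_{\F_q}(\Um)=w_R(\hv\Um^T)$, from which injectivity is immediate, and both invoke the computation of Lemma~\ref{unique} for the inclusion into the parent code. Your version is simply more explicit about well-definedness and the row-rank/column-rank step, which the paper compresses into the single line $Rk(\bv\Um)=Rk(\Um)=Rk(\hv\Um)$.
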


\begin{proof}
    Since the components of $\hv$ are linearly independent over $\F_q$, then $ker(f_\bv)$ is $\{ 0 \}$. So, $f_\bv$ is injective and $Rk(b\Um)=Rk(\Um)=Rk(h\Um)$ for any $\Um$ in $\F_{q}^{\max\limits_i(s_i)\times n}$. 
    By construction, we have $f_\bv(RGSS_W(Gab_k(\gv))\subset P_{RGSS_W(Gab_k(\gv))}$. We can therefore note that the minimum distance of $RGSS_W(Gab_k(\gv)$ is smaller than that of $P_{RGSS_W(Gab_k(\gv))}$. \qed
\end{proof}
In \cite{Gabssrang2005}, the authors used this technique to make an algorithm for the encoding and decoding of the subcode on the subspace because there was no simple method to do it. Unfortunately, this method offers the same correction capacity as that of primary code. So we prefer to use the construction described in the previous section, which allows us to have a direct construction method even if we still do not have an improvement for the correction capacity.

\section{Applications to Cryptography}\label{sec:application}
Cryptography based on rank-metric codes is a very serious alternative to reduce key sizes in code-based cryptography, because the best attacks in rank-metric are exponential with a quadratic exponent, while the best in hamming metric are exponential with a linear exponent. 
The goal of this section is to show that these codes are potential candidates for rank-metric cryptography in a McEliece settings.
Let us recall that the general idea in a McEliece-like cryptosystems is to first choose a linear code $\C$ from a family of structured codes, that will serve as the secret key. The code $\C$ will then undergo some transformations in order to hide its structure and will result in a public code $\C_{pub}$ that will be published together with a correction capacity $t'$ depending on the used transformations and the correction capacity of the secret code $\C$. It is well known that, faced to such a system, the attacker must either distinguish the public code from a random code, or solve and instance of the general decoding problem with a random code.
Following this idea, Gabidulin, Paramonov and Tretjakov (GPT) \cite{gabidulin1991ideals} were the first to propose the use of the family of Gabidulin codes in a system today known as GPT cryptosystem. 
Unfortunately, a polynomial-time attack was proposed on the GPT cryptosystem
and its improvements by Overbeck \cite{overbeck2005new} and several other works. The weaknesses mainly come from the fact that Gabidulin codes are invariant under the Frobenius automorphism.
To avoid this type of attacks, some directions have been taken, and one of them is the use of subcodes of Gabidulin codes. The latter idea was followed by Berger, Gaborit and Ruatta \cite{10.1007/978-3-319-71667-1_13} with a system based on subcodes of $q-$ary images $Im_{q,B}(Gab_k(\gv))$ of Gabidulin codes for a fixed basis $B$. The security of the system is then relied on the subcode equivalence problem \cite{10.1007/978-3-319-71667-1_13}. 

According to \cite{10.1007/978-3-319-71667-1_13} the complexity  for solving the subcode equivalence problem by enumeration of Basis $B$ is a factor of the number of $\F_q-$bases in $\F_{q^m}$.
Therefore, Generalized Subspace Subcodes can be used to improve the security or the key-sizes of the cryptosystem in \cite{10.1007/978-3-319-71667-1_13} as using GSS codes of Gabidulin codes is equivalent to base the system on subcodes of $q-$ary images $Im_{q,(B_j)_j}(Gab_k(\gv))$ of Gabidulin codes for a family of bases $B_1, \cdots, B_n$ and, in our case, the attacker must search for $n$ bases instead of just one for solving the subcode equivalence problem by enumeration of Basis. More formally, let's consider $\Qm, \Qm_1, \cdots, \Qm_n \in GL(q,m)$. For a $q-$ary matrix code $\D$ \footnote{Here the code $\D$ can be viewed as included in $\F_q^{mn}$, but we use the term matrix code here to make connexion with \cite{10.1007/978-3-319-71667-1_13} where the codewords of $\D$ are viewed as matrices from $\F_q^{m \times n}$, and it is also the case here} we also define $\Phi_{(\Qm_j)_{j=1}^n,\Qm}(\mathcal{D})$ by
\begin{equation}\label{gsic}
\Phi_{(\Qm_j)_{j=1}^n,\Qm}(\mathcal{D})=\{ (\Qm_1 M_1, \cdots, \Qm_n M_n)Q ~|~ M=M_{\phi_B^{-1}}(\cv),~ \cv \in \mathcal{D}\}.
\end{equation}
Using Generalized Subspace Subcodes of Gabidulin codes in a McEliece-like cryptosystem results to the following problem.
\begin{problem}(Generalized Subcode equivalence on binary Image Codes, GSIC)\label{prob1}.\\
Given two $q-$ary matrix codes $\C$ and $\mathcal{D}$, find $\Qm, \Qm_1, \cdots, \Qm_n \in GL(q,m)$, such that $\Phi_{(\Qm_j)_{j=1}^n,\Qm}(\mathcal{D})$ is a subcode of $\C$.
\end{problem}
One can remark that for $\Qm_1=\Qm_2=\cdots = \Qm_n$, the previous problem gives rise to the SEMC problem defined in \cite[Theorem 3]{10.1007/978-3-319-71667-1_13}, and shown to be NP-complete.

One of the best ways to solve the SEMC as shown in \cite{10.1007/978-3-319-71667-1_13}, is to enumerate the set of all $\F_q$-bases in $\F_{q^m}$ to find $\Qm_1$ whereas in our case we need to find $n$ different bases $\Qm_1, \Qm_2,..., \Qm_n$ and, these will give a cost estimated as 
\[
\frac{m^nN_{q,m,n}}{m(q^m-1)}\times (n+\frac{q-1}{q})m^3\times k'm^2(n-k)
\]
instead of  
\[ 
\frac{N_{q,m,1}}{m(q^m-1)}\times (n+\frac{q-1}{q})m^3\times k'm^2(n-k)
\] 
as obtained in \cite{10.1007/978-3-319-71667-1_13}, where 

\[
N_{q,m,n}=(\prod\limits_{i=0}^{m-1}(q^m-q^i))^n=N_{q,m,1}\cdot(\prod\limits_{i=0}^{m-1}(q^m-q^i))^{n-1}.
\]
We emphasize that the idea of this section was not to propose a cryptosystem, because this requires a complete overview of the different existing attacks and this goes beyond the scope of this paper. However, we wanted to at least show that generalized subspace subcodes in the rank-metric deserve to be studied for their potential applications in cryptography. We see from the above that there could be a significant gain by using RGSS of Gabidulin codes in a McEliece settings compared to the system of Berger et al. \cite{10.1007/978-3-319-71667-1_13}. 


\section{Conclusion}\label{sec:conclusion}
In conclusion, we have generalized the notion of Subspace Subcodes in Rank metric introduced by Gabidulin and Loidreau and also characterize this family by giving an algorithm allowing to compute  generator and parity-check matrices of these codes from the generator and parity-check matrices of the associated extended codes. We have also studied the specific case of generalized subspace subcodes of Gabidulin codes and show that they are applicable to cryptography in a McEliece settings. 

The proposed algorithm for generating generator and parity-check matrices based on associated extended codes not only enhances the practicality of these codes but also opens avenues for further exploration and application in cryptology.



\bibliographystyle{plain}
\bibliography{code}
	
\end{document}